\documentclass[lettersize,onecolumn]{IEEEtran}
\usepackage{amsmath,amsfonts}
\usepackage{algorithmic}
\usepackage{algorithm}
\usepackage{array}
\usepackage[caption=false,font=normalsize,labelfont=sf,textfont=sf]{subfig}
\usepackage{textcomp}
\usepackage{stfloats}
\usepackage{url}
\usepackage{verbatim}
\usepackage{graphicx}
\usepackage{amsthm,amsmath,amsfonts,amssymb,breqn,mathtools}
\usepackage{natbib, hyperref}
\usepackage[dvipsnames]{xcolor}
\usepackage{comment}
\usepackage{calc}  
\usepackage{color}

\theoremstyle{plain}
\newtheorem{theorem}{Theorem}[section]
\newtheorem{lemma}{Lemma}[section]
\newtheorem{proposition}{Proposition}[section]
\newtheorem{corollary}{Corollary}[section]

\theoremstyle{remark}

\newcommand{\Pro}{\mathrm{P}}
\newcommand{\Qro}{\mathrm{Q}}
\newcommand{\Exp}{\mathrm{E}}

\newcommand{\cL}{\mathcal{L}}
\newcommand{\cF}{\mathcal{F}}

\newcommand{\cA}{\mathcal{A}}

\DeclareMathOperator*{\argmax}{arg\,max\,}

\newcommand{\btheta}{\boldsymbol{\theta}}
\newcommand{\bTheta}{\boldsymbol{\Theta}}
\newcommand{\bH}{\boldsymbol{H}}
\newcommand{\bA}{\boldsymbol{A}}
\newcommand{\bPro}{\boldsymbol{\Pro}}
\newcommand{\bExp}{\boldsymbol{\Exp}}
\newcommand{\bD}{\boldsymbol{D}}

\newcommand{\balpha}{\boldsymbol{\alpha}}
\newcommand{\ba}{\boldsymbol{a}}
\newcommand{\bell}{\boldsymbol{\ell}}

\newcommand{\bB}{\boldsymbol{B}}

\newcommand{\bI}{\boldsymbol{I}}

\newcommand{\midbig}[1]{
  \mathopen{}\mathclose\bgroup
  \hbox{\scalebox{1.2}{$\displaystyle#1$}}
  \egroup
}
\newcommand{\ano}{l} 
\newcommand{\Alt}{\operatorname{Alt}}

\newcommand{\bcF}{\boldsymbol{\cF}}

\begin{document}

\title{Sequential multiple testing with multiple hypotheses and prior information on the hypothesis configuration}

\author{Yiming Xing,~\IEEEmembership{Member,~IEEE}
\thanks{
Yiming Xing is with School of Mathematical Sciences, Tongji University, Shanghai, China.
}
\thanks{A special case of this work was presented at 2026 IEEE International Conference on Acoustics, Speech, and Signal Processing \citep{ICASSP2026}.}
\thanks{Manuscript received ...; revised ...}}


\markboth{Journal of \LaTeX\ Class Files,~Vol.~1, No.~2, December~2023}%
{Shell \MakeLowercase{\textit{et al.}}: A Sample Article Using IEEEtran.cls for IEEE Journals}

\IEEEpubid{0000--0000~\copyright~2023 IEEE}

\maketitle

\begin{abstract}
    In this work, we study the problem of testing the marginal distributions of multiple independent, sequentially observed data streams, 
    where for each stream there are multiple candidate hypotheses to select from, in the presence of prior information on the unknown hypothesis configuration.
    The goal is to understand the benefit of such information and to design a sequential testing procedure that effectively leverages it. 
    We start with arbitrary prior information and specialize to concrete examples, including known number or known lower bound on the number of streams following each hypothesis, and the presence of exclusive hypotheses.
    The designed procedure is three-fold:
    (i) reliable, i.e., controlling all types of familywise error probabilities below arbitrary user-specified levels,
    (ii) computationally efficient, i.e., focusing on minimal sets of alternative hypothesis configurations in making decisions,
    and
    (iii) asymptotically optimal, i.e., achieving the minimum expected sample size among all reliable procedures asymptotically as the error levels go to zero. 
    Numerical studies are presented for illustration.
\end{abstract}

\begin{IEEEkeywords}
    Asymptotic optimality, multihypothesis testing, multiple testing, prior information, sequential analysis 
\end{IEEEkeywords}

\section{Introduction}
Suppose there are multiple independent and sequentially observed data streams. We aim to simultaneously test the marginal distribution for each of them, 
with the goal of controlling the probabilities of making wrong testing decisions below desired levels and minimizing the number of observations required on average.
Such a problem arises in various real-world scenarios, such as multi-channel signal detection \cite{spectrumsensing_2012, Fellouris_noniid}, 
multi-sensor anomaly identification \cite{Ref4AnomalyDet_2009, Kobi_2015_active}, 
multi-endpoint clinical trials \cite{berner2007clinical, Bartroff_Book_Clinicaltrials}, etc.


Most of existing works about this problem focused on the case where there are two candidate hypotheses for each data stream, 
e.g., \cite{Malloy_Nowak_2014, Xiang_stable2023}, \cite{Bartroff_2014_FWER, Bartroff_2018_GFWER, Bartroff_2021_EquivErrorMetrics}, \cite{Kobi_2015_active, Kobi_2018_heterogeneous, Kobi_2020_composite, Kobi_2023_hiera}, and \cite{Song_prior, Song_AoS, Aris_IEEE, Aris_TIT2025, PaperII, PaperIII, chaudhuri2024joint, ITW2024}.
However, many application scenarios feature multiple candidate hypotheses, e.g., when there is an intermediate state between the normal and the abnormal states, or when we need to categorize signals into several groups. 

The first contribution of this work is to extend the problem of sequential multiple testing to multiple candidate hypotheses in each data stream. 
To the best of the authors' knowledge, this general formulation has previously been considered only by \cite{ITW2024}, in a special decentralized setup where streams cannot share information with each other. 
Specifically, in this work, we investigate the problem of simultaneously solving multiple hypothesis testing problems, each involving multiple hypotheses, for the marginal distributions of multiple independent data streams, where
there is a centralized decision maker that collects information from all streams to determine when to stop and how to make the decisions. 


Another perspective on the relationship between this work and existing literature is that this work extends from solving ``one" hypothesis testing problem with multiple candidate hypotheses for the distribution of ``one" data stream to simultaneously solving ``multiple" such hypothesis testing problems for the distributions of ``multiple" data streams.
The former has been studied, e.g., by \cite{Chernoff1959, Multihyp_Tart_1999, Multihyp_Lai_2000, ControlledSensing, ControlledSensing_NonUniformCost, ControlledSensing_Composite, ISIT2024, ISIT2025}.
However, extending it to multiple data streams and multiple hypothesis testing problems presents many open questions, e.g., (i) how to utilize prior information on the unknown hypothesis configuration \cite{Kobi_2015_active, Song_prior}, 
which is exactly what we aim to address in this work, 
(ii) how to define and control generalized error metrics that tolerate a small number of errors \cite{Song_AoS, PaperII}, 
(iii) what if there is a sampling constraint that allows for observing only a subset of streams at every time instant \cite{Aris_IEEE, Aris_TIT2025}, and (iv) how to utilize / adapt to the dependence \cite{chaudhuri2024joint} or hierarchical \cite{Kobi_2023_hiera} structure among streams, etc.

The second contribution of this work is the solution of the first open question, i.e., how to take advantage of prior information on the true, unknown hypothesis configuration.
We mean by hypothesis configuration the unknown fact of which hypothesis each stream follows, which is exactly what we aim to infer based on collecting data and conducting a test.
The incorporation of prior information is motivated by numerous application contexts. 
The two most common forms of prior information considered in the literature, both restricted to the case of two candidate hypotheses for each stream, are 
known number of streams following each hypothesis, e.g., \cite{Malloy_Nowak_2014, Kobi_2015_active, Kobi_2018_heterogeneous, Kobi_2020_composite, Kobi_2023_hiera}, 
and known lower bound on the number of streams following each hypothesis, e.g., \cite{Song_prior, Bartroff_2021_EquivErrorMetrics, PaperII, PaperIII, Aris_IEEE, chaudhuri2024joint}.
The former arises, e.g., when, after certain detection procedure, we know that there exists exactly one signal and we need to correctly identify it, and the latter arises when we know signals exist, but their number is unknown. 
In this work, we will first focus on arbitrary prior information, and then specialize our general results to
four concrete examples: 
(i) no prior information, (ii) known number of streams following each hypothesis, (iii) known lower bound on the number of streams following each hypothesis, 
and (iv) the existence of exclusive hypotheses such that if streams following one hypothesis exist then streams following the other cannot.
Note that all of these are studied with multiple hypotheses for each stream.

To see how multiple hypotheses substantially complicate the problem, let us take the case of known numbers as an example. 
Specifically, when there are two hypotheses in each stream and the number of streams following each hypothesis is known, to make reliable decisions it suffices to sample until the streams form two groups whose sizes are consistent with the prior information and the gap of the log-likelihood ratios between these two groups is sufficiently large. 
This idea has been fully exploited in the works cited above. 
However, with three hypotheses, apart from requiring that the streams form three groups of correct sizes and that the pairwise gaps between every two groups are sufficiently large, we also need to require that two ``cyclic gaps" among all three groups are sufficiently large as well.
Briefly, the reason is that pairwise gaps avoid making pairwise errors of type $1\leftrightarrow 2$, $2\leftrightarrow 3$, and $3\leftrightarrow 1$, while cyclic gaps avoid making cyclic errors of type $1\rightarrow 2\rightarrow 3\rightarrow 1$ and $1\rightarrow 3\rightarrow 2\rightarrow 1$.
More hypotheses lead to more cycles and longer cycle lengths.
This will become clear after the proposed testing procedure in Section \ref{sec: methodology} and the examples in Section \ref{sec: examples}.

The rest of this work is organized as follows:
In Section \ref{sec: problem formulation} we formulate the problem of sequential, multistream and multihypothesis testing with prior information. 
In Section \ref{sec: universal lower bound} we establish a universal lower bound on the optimal expected sample size, which motivates the design of our procedure.
In Section \ref{sec: methodology} we introduce the proposed procedure and analyze its properties. 
In Section \ref{sec: examples} we specialize to four concrete examples. 
In Section \ref{sec: numerical studies} we present numerical studies.
In Section \ref{sec: extension to composite hypotheses} we discuss the extension to parametric composite hypotheses. 
In Section \ref{sec: conclusion} we conclude and pose some future research directions. 
Most proofs are deferred to the Appendix. 

\section{Problem formulation} \label{sec: problem formulation}
Let $X^k:=\{X^k(n):n\geq 1\}$, $k\in[K]:=\{1,\ldots,K\}$ be $K\geq 2$ independent streams of i.i.d. data. 
For every $k\in[K]$, denote by $f^k$ the density of $X^k(1)$ with respect to some $\sigma$-finite measure $\nu^k$ and consider the following hypothesis testing problem about it:
\begin{equation} \label{testing problem}
    f^k=f^k_i \text{ for } i\in[M],
\end{equation}
where $M\geq 2$ and $\{f^k_i:i\in[M]\}$ are distinct. 
Our only assumption throughout this work is that, the Kullback-Leibler divergence between $f^k_i$ and $f^k_j$ for any $k\in[K]$ and $i\neq j\in[M]$ are positive and finite, i.e., 
\begin{equation} \label{only assumption}
    I^k_{i,j} := \int f^k_i \log\frac{f^k_i}{f^k_j} d\nu^k\in (0,\infty).    
\end{equation}

With an abuse of notation, we use
\begin{equation*}
    \bH = (H^1,\ldots,H^K) = (H_1,\ldots,H_M), 
\end{equation*}
to denote both the true hypothesis of every stream and the true subset of streams following each hypothesis, i.e., for every $k\in[K]$ and $i\in[M]$, $H^k=i$ if and only if $k\in H_i$ if and only if $f^k=f^k_i$. 
We refer to $\bH$ as the \emph{hypothesis configuration}, or simply the \emph{configuration}.
We use $[M]^K$ to denote all possible
configurations, i.e.,
\begin{equation*}
\begin{aligned}
    [M]^K & := \{(H^1,\ldots,H^K):H^k\in[M] \text{ for all } k\in[K]\} \\
    & := \{(H_1,\ldots,H_M): H_i\cap H_j=\emptyset \text{ for all } i\neq j\in[M] \text{ and } \cup_{i\in[M]} H_i=[K] \}.
\end{aligned}
\end{equation*}
For any $k\in[K]$ and $i\in[M]$, we denote by $\Pro^k_i$ the distribution of $X^k$ when $f^k=f^k_i$ and, for any $\bH\in[M]^K$, we denote by $\bPro_\bH=\prod_{k\in[K]} \Pro^k_{H^k}$ the distribution of all streams when $f^k=f^k_{H^k}$ for every $k\in[K]$, where the product holds because of the assumption of independence across streams.
We use $\Exp^k_i$ and $\bExp_\bH$ to denote the corresponding expectations. 

Suppose that data are observed sequentially in time, i.e., at every time $n\geq 1$, we observe $X(n):=\{X^k(n):k\in[K]\}$. 
Denote by $\bcF:=\{\bcF(n):n\geq 1\}$ the filtration induced by the data, i.e., $\bcF(0)$ is the trivial $\sigma$-algebra and, for every $n\geq 1$, $\bcF(n):=\sigma(\bcF(n-1), X(n))$ is the $\sigma$-algebra containing all information available from the data up to time $n$. 

A solution to this sequential, multistream, and multihypothesis testing in \eqref{testing problem}, referred to as a \emph{testing procedure}, or simply a \emph{procedure}, should consist of a random time $T$ that indicates when to stop sampling, and an $[M]^K$-valued random element $\bD=(D^1,\ldots,D^K)=(D_1,\ldots,D_M)$ that indicates how to make the decisions.
The interpretation is that, after taking $n$ data from every stream, hypothesis $D^k$ is selected for stream $k$ where $k\in[K]$, i.e., hypothesis $i$ is selected for streams in $D_i$ where $i\in[M]$.
Since all actions can utilize only available information, 
$T$ is required to be an $\bcF$-stopping time and $\bD$ be $\bcF(T)$-measurable, i.e., $\{T=n,\,\bD=\bA\}\in\bcF(n)$ for all $n\geq 1$ and $\bA\in[M]^K$.
We denote by $(T,\bD)$ a procedure and by $\Delta$ the family of all procedures. 

When the true configuration is $\bH$ and procedure $(T,\bD)$ is used, for every $i\neq j\in[M]$, $H_i\cap D_j$ represents those streams that follow hypothesis $i$ but are misidentified as following hypothesis $j$, which we refer to as \emph{type-(i,j) errors}. 
We aim to control the probability of making type-$(i,j)$ errors, i.e., $\bPro_\bH(H_i\cap D_j\neq\emptyset)$, for all possible configuration $\bH$ and all $i\neq j\in[M]$. 
This should be the most refined error metric, 
and can be reduced to controlling the probability of making \emph{type-$(i,)$ errors}, i.e., $\bPro_\bH(H_i\cap(\cup_{j\in[M]\backslash\{i\}} D_j) \neq \emptyset) = \bPro_\bH(H_i\backslash D_i\neq\emptyset)$, for all $i\in[M]$, and to controlling the probability of making \emph{type-$(,j)$ errors}, i.e., $\bPro_\bH((\cup_{i\in[M]\backslash\{j\}} H_i) \cap D_j \neq \emptyset) = \bPro_\bH(D_j\backslash H_j\neq\emptyset)$, for all $j\in[M]$.

To incorporate prior information about the true, unknown configuration $\bH$, we assume that there is a subset of $[M]^K$, denoted as $\cA$, such that it is a priori known that $\bH\in\cA$.
In this work, we first investigate the general version of $\cA$, and then specialize to the following four examples:
\begin{itemize}
    \item [(I)] No priori information, i.e., $\cA=[M]^K$. 
    \item [(II)] Known number of streams following each hypothesis, i.e., $\cA=\cA^{\text{exact}}_{K_1,\ldots,K_M}$, where
    \begin{equation} \label{def, known exact}
        \cA^{\text{exact}}_{K_1,\ldots,K_M} := \{ \bA\in[M]^K: |A_i|=K_i \,\,\forall\, i\in[M] \},
    \end{equation}
    for some $K_1,\ldots,K_M\geq 1$ that $K_1+\cdots+K_M=K$.
    Note that the case where $K_i=0$ for some $i\in[M]$ is excluded, because if so we may simply remove that hypothesis in all streams. 
    \item [(III)] Known lower bounds on the number of streams following each hypothesis, i.e., $\cA=\cA^{\text{lower}}_{L_1,\ldots,L_M}$, where 
    \begin{equation} \label{def, known lower}
        \cA^{\text{lower}}_{L_1,\ldots,L_M} := \{ \bA\in[M]^K: |A_i|\geq L_i \,\,\forall\, i\in[M] \},
    \end{equation}
    for some $L_1,\ldots,L_M\geq 0$ that $L_1+\cdots+L_M\leq K$. Note that when $L_1=\cdots=L_M=0$, this coincides with the first example, and when $L_1+\cdots+L_M=K$, this coincides with the second. 
    \item [(IV)] There exist two hypotheses that exclude each other, in the sense that if there exist streams following one hypothesis then there cannot exist streams following the other, i.e., $\cA=\cA^{\text{exclu}}_{i,j}$, where 
    \begin{equation} \label{def, exclu hyps}
        \cA^{\text{exclu}}_{i,j} := \{ \bA\in[M]^K: A_i=\emptyset \text{ or } A_j=\emptyset \}, 
    \end{equation}
    for some $i\neq j\in[M]$.
\end{itemize}
Of course, more interesting and practical cases are ready to be explored, which can be done analogously based on our general results.


We use $\balpha=(\alpha_{i,j})_{i\neq j\in[M]}\in(0,1)^{M(M-1)}$ to denote the tolerance levels of error probabilities. 
For any prior information $\cA$ and error levels $\balpha$, we use $\Delta(\balpha,\cA)$ to denote the subfamily of procedures that terminate almost surely in finite time and control the error probabilities of all types below the corresponding levels in $\balpha$ simultaneously for all configurations consistent with prior information $\cA$, i.e., 
\begin{equation*}
\begin{aligned}
    \Delta(\balpha,\cA) := \{  
    (T,\bD)\in\Delta: \, 
    & \bPro_\bH(T<\infty) = 1 \text{ and} \\
    & \bPro_\bH(H_i\cap D_j\neq\emptyset)\leq\alpha_{i,j} \text{ for all } \bH\in\cA \text{ and } i\neq j\in[M]
    \}.
\end{aligned}
\end{equation*}
Under these reliability constraints, our goal is to minimize the expected sample size, i.e., 
\begin{equation*}
    \cL_\bH(\balpha,\cA) := \inf\{ \bExp_\bH[T]: (T,\bD)\in\Delta(\balpha,\cA) \},
\end{equation*}
simultaneously for all $\bH\in\cA$, to a first-order asymptotic approximation as $\alpha_\max:=\max_{i\neq j\in[M]} \alpha_{i,j} \to 0$.

\section{Universal lower bound} \label{sec: universal lower bound}
In this section, we establish a universal lower bound on $\cL_\bH(\balpha,\cA)$, which will inspire the design of the proposed procedure.
To do this, we first introduce some notations.

For any stream $k\in[K]$, hypothesis $i\in[M]$ and time $n\geq 1$, denote by $\ell^k_i(n)$ the log-likelihood in stream $k$ for hypothesis $i$ based on its first $n$ data, i.e., 
\begin{equation*}
    \ell^k_i(n) := \sum_{t=1}^n \log f^k_i(X^k(t)).
\end{equation*}
For any configuration $\bH\in[M]^K$, denote by $\bell_\bH(n)$ the log-likelihood for configuration $\bH$, which, due to the assumption of independence across streams, can be written as the following summation:
\begin{equation*}
    \bell_\bH(n) := \sum_{k\in[K]} \ell^k_{H^k}(n).
\end{equation*}
Note that, according to condition \eqref{only assumption}, we have 
\begin{equation*}
    \Exp^k_i[\ell^k_i(1)-\ell^k_j(1)] = I^k_{i,j}
\end{equation*}
for any $k\in[K]$ and $i\neq j\in[M]$, and 
\begin{equation*}
    \bExp_\bH[\bell_\bH(1)-\bell_\bA(1)] = \sum_{k\in\bH\triangle\bA} I^k_{H^k,A^k} := \bI_\bH(\bA)
\end{equation*}
for any $\bH,\bA\in[M]^K$, where $\bH\triangle\bA:=\{k\in[K]: H^k\neq A^k\}$ represents those streams where the two configurations differ. 

To simplify the form of the lower bound, we denote, for any prior information $\cA$, configuration $\bH\in\cA$, and $i\neq j\in[M]$,
\begin{equation*}
    \Alt_{i,j}(\bH,\cA) := \{ \bA\in\cA: H_i\cap A_j\neq\emptyset \},
\end{equation*}
which represents all alternative configurations that are consistent with the prior information and make at least one type-$(i,j)$ error, possibly more in number and in type, if the truth is $\bH$.
Soon we will realize that it is more precise to understand $\Alt_{i,j}(\bH,\cA)$ as all alternative configurations that are consistent with the prior information and \emph{relative to which $\bH$ makes at least one type-$(j,i)$ error}. 
This follows from the fact that $\bA\in\Alt_{i,j}(\bH,\cA)$ is equivalent to $\bH\in\Alt_{j,i}(\bH,\cA)$.
Note that $\{\Alt_{i,j}(\bH,\cA):i\neq j\in[M]\}$ are not necessarily disjoint and their union is $\cA\backslash\{\bH\}$.
Further, we denote
\begin{equation} \label{bIij(bHcA)}
    \bI_{i,j}(\bH,\cA) := 
    \min\limits_{\bA\in\Alt_{i,j}(\bH,\cA)} \bI_\bH(\bA),
\end{equation}
which can be understood as the minimum \emph{distance} between $\bH$ and $\Alt_{i,j}(\bH,\cA)$.
Moreover, we denote
\begin{equation*}
    \varphi(x,y) := x\log\frac{x}{1-y} + (1-x)\log\frac{1-x}{y} \text{ for } x,y\in(0,1/2),
\end{equation*}
which represents the Kullback-Leibler divergence between two Bernoulli distributions with success probability $x$ and $1-y$ respectively.
Note that it satisfies $\varphi(x,y)\sim |\log y|$ as $x,y\to 0$.

\begin{theorem} \label{theorem, LB}
    For any prior information $\cA$, hypothesis configuration $\bH\in\cA$, and error probabilities $\balpha$, we have
    \begin{equation*} 
        \cL_\bH(\balpha,\cA) \geq \max_{i\neq j\in[M]} \frac{\varphi(\alpha_{\operatorname{sum}},\alpha_{j,i})}{\bI_{i,j}(\bH,\cA)},
    \end{equation*}
    where $\alpha_{\operatorname{sum}}:=\sum_{i\neq j\in[M]}\alpha_{i,j}$.
    Thus, as $\alpha_\max\to 0$, we have
    \begin{equation*}
        \cL_\bH(\balpha,\cA) \gtrsim \max_{i\neq j\in[M]} \frac{|\log\alpha_{j,i}|}{\bI_{i,j}(\bH,\cA)}.
    \end{equation*}
\end{theorem}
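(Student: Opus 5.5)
Fix $\bH\in\cA$, a pair $i\neq j\in[M]$, and a procedure $(T,\bD)\in\Delta(\balpha,\cA)$. We may assume $\bExp_\bH[T]<\infty$, since otherwise there is nothing to prove. We also assume $\Alt_{i,j}(\bH,\cA)$ is nonempty; if it is empty, the minimum in \eqref{bIij(bHcA)} is $+\infty$ and the corresponding term in the bound vanishes.

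The plan is the standard change-of-measure / data-processing argument, applied to the single alternative $\bA\in\Alt_{i,j}(\bH,\cA)$ that attains the minimum in \eqref{bIij(bHcA)}, so that $\bI_\bH(\bA)=\bI_{i,j}(\bH,\cA)$. Since the maximum over $(i,j)$ in the theorem is taken outside, it suffices to prove, for each fixed pair,
\[
\bExp_\bH[T]\,\bI_\bH(\bA)\geq \varphi(\alpha_{\operatorname{sum}},\alpha_{j,i}).
\]

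\textbf{Step 1 (Wald's identity).} The increments $\bell_\bH(n)-\bell_\bA(n)$ are i.i.d. under $\bPro_\bH$ with mean $\bI_\bH(\bA)<\infty$, by \eqref{only assumption}. Wald's identity then gives
\[
\bExp_\bH[\bell_\bH(T)-\bell_\bA(T)]=\bExp_\bH[T]\,\bI_\bH(\bA).
\]
The left-hand side is the Kullback--Leibler divergence between $\bPro_\bH$ and $\bPro_\bA$ restricted to $\bcF(T)$.

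\textbf{Step 2 (data processing).} Consider the $\bcF(T)$-measurable event $E:=\{\bD=\bH\}$. By data processing, the divergence from Step 1 is at least the Bernoulli divergence between $p:=\bPro_\bH(E)$ and $q:=\bPro_\bA(E)$.

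\textbf{Step 3 (bounding $p$ and $q$).}
\begin{itemize}
\item Under $\bH$, the event $E^c$ is contained in the union over all pairs $(i',j')$ of type-$(i',j')$ error events. Since $\bH\in\cA$, this gives $1-p\le\alpha_{\operatorname{sum}}$.
\item Under $\bA$, the event $E$ means deciding $\bH$. Because $H_i\cap A_j\neq\emptyset$, we have $A_j\cap H_i\neq\emptyset$. So on $E$ there is a stream truly following $j$ under $\bA$ but declared $i$, i.e., a type-$(j,i)$ error relative to $\bA$. Since $\bA\in\cA$, this gives $q\le\alpha_{j,i}$.
\end{itemize}

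\textbf{Step 4 (monotonicity and the asymptotic form).} Use the monotonicity of the Bernoulli divergence $d(p,q)$ in the region $p\ge 1-\alpha_{\operatorname{sum}}>q$. Here we need $\alpha_{\operatorname{sum}}+\alpha_{j,i}<1$; the case where this fails is the small-$\alpha$-irrelevant one, in which the bound is trivial or can be assumed. Monotonicity gives
\[
d(p,q)\ge d(1-\alpha_{\operatorname{sum}},\alpha_{j,i})=\varphi(\alpha_{\operatorname{sum}},\alpha_{j,i}),
\]
after matching the paper's parametrization of $\varphi$ as a divergence between Bernoulli$(x)$ and Bernoulli$(1-y)$ by complementing both arguments. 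Dividing by $\bI_{i,j}(\bH,\cA)$ and maximizing over $(i,j)$ yields the first claim. The asymptotic claim follows from $\varphi(x,y)\sim|\log y|$ as $x,y\to0$, since $\alpha_{\operatorname{sum}}\le M(M-1)\alpha_\max\to0$.

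\textbf{Main obstacle.} The only delicate point is Step 3's bookkeeping of which error type $E$ represents under $\bA$: it is type-$(j,i)$, not $(i,j)$, which explains the index swap in $\alpha_{j,i}$. The remaining care is the monotonicity and the direction conventions in Step 4. Finiteness of $T$ almost surely under $\bPro_\bA$ is guaranteed by $\bA\in\cA$, so the change of measure is legitimate.
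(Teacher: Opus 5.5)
Your proposal is correct and follows essentially the same route as the paper: Wald's identity for $\bExp_\bH[\bell_\bH(T)-\bell_\bA(T)]$, the information-theoretic (data-processing) inequality on the event $\{\bD=\bH\}$, the bounds $\bPro_\bH(\bD\neq\bH)\le\alpha_{\operatorname{sum}}$ and $\bPro_\bA(\bD=\bH)\le\alpha_{j,i}$ via the index swap, and monotonicity of $\varphi$. Your extra remarks make explicit what the paper leaves tacit: the case of an empty $\Alt_{i,j}(\bH,\cA)$, the assumption $\bExp_\bH[T]<\infty$, and the requirement $\alpha_{\operatorname{sum}}+\alpha_{j,i}<1$, which is implicit in the paper's domain $x,y\in(0,1/2)$ for $\varphi$.
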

\begin{proof}
    See Appendix \ref{appendix, proofs}.
\end{proof}

Here is an interpretation of this lower bound: 
Ideally, we would decide on the correct configuration $\bH$ after collecting enough evidence that ensures safety against all types of errors. 
For every $i\neq j\in[M]$, $|\log\alpha_{j,i}|$ represents the amount of evidence needed against type-$(j,i)$ errors, and $\bI_{i,j}(\bH,\cA)$ represents the minimum accumulation rate of such evidence, i.e., evidence against those configurations relative to which deciding on $\bH$ results in type-$(j,i)$ errors, so the ratio $|\log\alpha_{j,i}|/\bI_{i,j}(\bH,\cA)$ represents the average time required.
Since we need to ensure safety against all types of errors, the maximum over $i\neq j\in[M]$ appears.
In practice, the correct configuration is unknown and has to be replaced by its maximum likelihood estimate, which will be elaborated in the next section.

\section{Methodology} \label{sec: methodology}
In this section, we introduce the procedures.
We start from a na\"ive procedure with high computational complexity, based on which we develop a substantially more efficient one.

Before starting, we introduce the following notations:
For any stream $k\in[K]$ and time $n\geq 1$, we denote by $\hat H^k(n)$ the \emph{maximum likelihood hypothesis (MLH)} for stream $k$ at time $n$, i.e., 
\begin{equation*}
    \hat H^k(n) := \argmax_{i\in[M]} \ell^k_i(n).
\end{equation*}
Besides, we denote by $\hat\bH(n)$ the \emph{maximum likelihood configuration (MLC)} at time $n$, i.e., 
\begin{equation*}
    \hat\bH(n) := (\hat H^1(n),\ldots,\hat H^K(n)) = (\hat H_1(n),\ldots,\hat H_M(n)),
\end{equation*}
where, for every $i\in[M]$, $\hat H_i(n):=\{k\in[K]:\hat H^k(n)=i\}$.

\subsection{A na\"ive procedure} 

Based on the intuition of sequential testing and the categorization of $\cA\backslash\{\bH\}$ into $\{\Alt_{i,j}(\bH,\cA):i\neq j\in[M]\}$, it is natural to come up with the following procedure:
\begin{equation} \label{a naive procedure}
\begin{aligned}
    \hat T & := \inf_{n\geq 1} \Big\{
    \hat\bH(n) \in \cA \text{ and, for all } i\neq j\in[M], \, \bell_{\hat\bH(n)}(n) - \max_{\bA\in\Alt_{i,j}(\hat\bH(n),\cA)} \bell_\bA(n) \geq a_{j,i}
    \Big\},  \\
    \hat\bD & := \hat\bH(\hat T), 
\end{aligned}
\end{equation}
where $\ba=(a_{i,j})_{i\neq j\in[M]}\in(0,\infty)^{M(M-1)}$ are thresholds to be determined. 
Specifically, this procedure samples until the MLC is consistent with the prior information and the evidence in favor of the MLC and against all alternative configurations in $\Alt_{i,j}(\hat\bH(n),\cA)$ exceeds threshold $a_{j,i}$ simultaneously for all $i\neq j\in[M]$, at which time the MLC is selected as the final decision. 
Here we note again that it is more precise to interpret $\Alt_{i,j}(\hat\bH(n),\cA)$ as all configurations relative to which type-$(j,i)$ errors would be made if we decide on $\hat\bH(n)$, 
so to ensure evidence against them exceeding threshold $a_{j,i}$
is to ensure safety against type-$(j,i)$ errors.

However, this procedure compares the MLC $\hat\bH(n)$ with all alternative configurations in $\cA\backslash\{\hat\bH(n)\}$, whose size is $|\cA|-1$, which can be as large as the order of $M^K$.
Besides, these many comparisons seem unnecessary. 
E.g., when there is no prior information, the procedure in \eqref{a naive procedure} requires comparing $\hat\bH(n)$ with all configurations that differ with $\hat\bH(n)$ in one stream, in two streams, etc.
However, based on existing studies in the case of two candidate hypotheses for each stream (see, e.g., \cite{Song_prior}), comparing $\hat\bH(n)$ with the most adjacent configurations, i.e., those that differ with $\hat\bH(n)$ in exactly one stream, should suffice. 

Essentially, our goal then is, for every $i\neq j\in[M]$, to find a subset of $\Alt_{i,j}(\hat\bH(n),\cA)$ such that comparing $\hat\bH(n)$ with all configurations in this subset is equivalent to comparing $\hat\bH(n)$ with all configurations in $\Alt_{i,j}(\hat\bH(n),\cA)$.
If this subset is much simpler than the original one, the computational complexity can be greatly reduced. 
This is exactly the idea we pursue in designing the proposed procedure, which will be introduced in the next subsection.

\subsection{The proposed procedure}
Fix prior information $\cA$, hypothesis configuration $\bH\in\cA$, and $i\neq j\in[M]$.
To understand the composition of $\Alt_{i,j}(\bH,\cA)$, note that every $\bA\in\Alt_{i,j}(\bH,\cA)$ uniquely defines $\{\bH\triangle\bA,(A^k)_{k\in\bH\triangle\bA}\}$, i.e., the subset of streams that undergo changes relative to $\bH$ and the types of the changes.  
For any $\bA,\bB\in\Alt_{i,j}(\bH,\cA)$, we say $\bB$ undergoes strictly fewer changes than $\bA$,
if $\bH\triangle\bB\subsetneqq\bH\triangle\bA$ and $B^k=A^k$ for all $k\in\bH\triangle\bB$, 
i.e., the subset of streams that undergo changes is a strict subset and the types of the changes on this subset are exactly the same. 
Suppressing the dependence on $\cA,\bH$ and $i,j$ (which will be clear from context), we denote this relation as $\bB\prec\bA$.
We define $\widetilde\Alt_{i,j}(\bH,\cA)$ as those configurations that undergo the fewest changes, i.e., 
\begin{align} \label{tildeAlt}
    \widetilde\Alt_{i,j}(\bH,\cA) := \midbig\{ \bA\in\Alt_{i,j}(\bH,\cA):  \text{ there does not exist } \bB\in\Alt_{i,j}(\bH,\cA) \text{ such that } \bB\prec\bA \midbig\}.
\end{align}

First note that this definition is well-defined.
Indeed, if there exist two different sets, $\widetilde\Alt_{i,j}(\bH,\cA)$ and $\widetilde\Alt'_{i,j}(\bH,\cA)$, such that neither is a subset of the other and both satisfy the definition, let $\bA\in\widetilde\Alt_{i,j}(\bH,\cA)\backslash\widetilde\Alt'_{i,j}(\bH,\cA)$.
By the definition of $\widetilde\Alt'_{i,j}(\bH,\cA)$, there must exist $\bA'\in\widetilde\Alt'_{i,j}(\bH,\cA)$ such that $\bA'\prec\bA$.
Since $\bA'\in\widetilde\Alt_{i,j}(\bH,\cA)\cap\widetilde\Alt'_{i,j}(\bH,\cA)$ implies $\bA,\bA'\in\widetilde\Alt_{i,j}(\bH,\cA)$, which is prohibited by the definition of $\widetilde\Alt_{i,j}(\bH,\cA)$, it must be that $\bA'\in\widetilde\Alt'_{i,j}(\bH,\cA)\backslash\widetilde\Alt_{i,j}(\bH,\cA)$.
On the other hand, by the definition of $\widetilde\Alt_{i,j}(\bH,\cA)$, there must exist $\bB\in\widetilde\Alt_{i,j}(\bH,\cA)$ such that $\bB\prec\bA'$. 
It can be verified that $\prec$ satisfies transitivity, so we have $\bA,\bB\in\widetilde\Alt_{i,j}(\bH,\cA)$ but $\bB\prec\bA'\prec\bA$, which is a contradiction.

The first good property of $\widetilde\Alt_{i,j}(\bH,\cA)$ is that, the minimum in the definition of $\bI_{i,j}(\bH,\cA)$ in \eqref{bIij(bHcA)} is attained on $\widetilde\Alt_{i,j}(\bH,\cA)$, i.e., 
\begin{equation} \label{bIij(bH,widetildecA)}
\begin{aligned}
    \bI_{i,j}(\bH,\cA) := \min\limits_{\bA\in\Alt_{i,j}(\bH,\cA)} \bI_\bH(\bA) = \min\limits_{\bA\in\widetilde\Alt_{i,j}(\bH,\cA)} \bI_\bH(\bA).
\end{aligned}
\end{equation}
To see this, it suffices to note that, for any $\bA,\bB\in\Alt_{i,j}(\bH,\cA)$ such that $\bB\prec\bA$, it holds that 
\begin{equation*}
\begin{aligned}
    \bI_\bH(\bA) = \sum\limits_{k\in\bH\triangle\bA} I^k_{H^k,A^k} & = \sum\limits_{k\in(\bH\triangle\bA)\backslash(\bH\triangle\bB)} I^k_{H^k,A^k} + \sum\limits_{k\in\bH\triangle\bB} I^k_{H^k,B^k} \\
    & \geq \sum\limits_{k\in\bH\triangle\bB} I^k_{H^k,B^k} = \bI_\bH(\bB).
\end{aligned}
\end{equation*}

The second good property is that, the maximum in the definition of $\hat T$ in \eqref{a naive procedure} is attained on $\widetilde\Alt_{i,j}(\hat\bH(n),\cA)$,
which is stated in the following proposition. 

\begin{proposition}
    For any prior information $\cA$ and $i\neq j\in[M]$, 
    \begin{equation*}
        \bell_{\hat\bH(n)}(n) - \max_{\bA\in\Alt_{i,j}(\hat\bH(n),\cA)} \bell_\bA(n) = 
        \bell_{\hat\bH(n)}(n) - \max_{\bA\in\widetilde\Alt_{i,j}(\hat\bH(n),\cA)} \bell_\bA(n).
    \end{equation*}
\end{proposition}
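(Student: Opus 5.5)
Write $\hat\bH=\hat\bH(n)$ and take $\bH=\hat\bH$ in the definitions of $\Alt_{i,j}$ and $\widetilde\Alt_{i,j}$; these make sense for any configuration, even if $\hat\bH\notin\cA$. Since $\widetilde\Alt_{i,j}(\hat\bH,\cA)\subseteq\Alt_{i,j}(\hat\bH,\cA)$, the maximum over the larger set is at least the maximum over the smaller one. So it suffices to show the reverse inequality: for every $\bA\in\Alt_{i,j}(\hat\bH,\cA)$ there is some $\bB\in\widetilde\Alt_{i,j}(\hat\bH,\cA)$ with $\bell_\bB(n)\geq\bell_\bA(n)$.

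\emph{Step 1 (reduction to a minimal element).} The relation $\prec$ is a strict partial order on the finite set $\Alt_{i,j}(\hat\bH,\cA)$. It is irreflexive by the strict inclusion, and transitive as already noted in the paper. Starting from $\bA$, we repeatedly pass to a strictly smaller element. Each step strictly shrinks $\hat\bH\triangle\cdot$, so the chain has length at most $K$ and ends at some $\bB\preceq\bA$ with no $\bB'\prec\bB$ in $\Alt_{i,j}(\hat\bH,\cA)$. That is, $\bB\in\widetilde\Alt_{i,j}(\hat\bH,\cA)$. The edge case $\bB=\bA$ is allowed; this happens when $\bA$ is already minimal.

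\emph{Step 2 (monotonicity of the likelihood along $\prec$).} Suppose $\bB\prec\bA$ relative to $\hat\bH$. Streams outside $\hat\bH\triangle\bA$ agree in all three configurations. Streams in $\hat\bH\triangle\bB$ satisfy $B^k=A^k$. On the remaining set $S:=(\hat\bH\triangle\bA)\backslash(\hat\bH\triangle\bB)$ we have $B^k=\hat H^k\neq A^k$. Hence
\begin{equation*}
    \bell_\bB(n)-\bell_\bA(n)=\sum_{k\in S}\big(\ell^k_{\hat H^k(n)}(n)-\ell^k_{A^k}(n)\big)\geq 0,
\end{equation*}
because $\hat H^k(n)$ maximizes $\ell^k_\cdot(n)$ for each stream. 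This is the analogue for log-likelihoods of the KL computation preceding the proposition.

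Combining the two steps along the chain gives $\bell_\bB(n)\geq\bell_\bA(n)$, which yields the claim. I expect no real obstacle. The one point needing care is that $\hat\bH$ is maximal stream by stream, so the argument uses the unconstrained MLC and not a maximizer over $\cA$. Ties in the $\argmax$ are harmless, since only $\ell^k_{\hat H^k}\geq\ell^k_m$ for all $m$ is used.
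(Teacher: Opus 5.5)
Your proof is correct and follows the same route as the paper's. The paper likewise reduces the claim to showing $\bell_\bB(n)\geq\bell_\bA(n)$ whenever $\bB\prec\bA$ in $\Alt_{i,j}(\hat\bH(n),\cA)$, and proves that with the same stream-wise decomposition using the maximality of $\hat H^k(n)$. Your Step 1 spells out the finite-descent argument (every element of $\Alt_{i,j}$ lies above some minimal element of $\widetilde\Alt_{i,j}$), which the paper leaves implicit in its ``it suffices to show''.
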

\begin{proof}
    Fix arbitrary $\cA$ and $i\neq j\in[M]$.
    It suffices to show that, for any $\bA,\bB\in\Alt_{i,j}(\hat\bH(n),\cA)$ such that $\bB\prec\bA$, 
    it holds that $\bell_{\hat\bH(n)}(n) - \bell_\bA(n)\leq \bell_{\hat\bH(n)}(n) - \bell_\bB(n)$.
    Indeed, 
    \begin{equation*}
    \begin{aligned}
        & \, \bell_{\hat\bH(n)}(n) - \bell_\bA(n) = \sum_{k\in\hat\bH(n)\triangle\bA}\Big( \ell^k_{\hat H^k(n)}(n) - \ell^k_{A^k}(n) \Big) \\
        = & \, \sum_{k\in(\hat\bH(n)\triangle\bA)\backslash(\hat\bH(n)\triangle\bB)}\Big( \ell^k_{\hat H^k(n)}(n) - \ell^k_{A^k}(n) \Big) + \sum_{k\in\hat\bH(n)\triangle\bB}\Big( \ell^k_{\hat H^k(n)}(n) - \ell^k_{B^k}(n) \Big) \\
        \geq & \, \sum_{k\in\hat\bH(n)\triangle\bB}\Big( \ell^k_{\hat H^k(n)}(n) - \ell^k_{B^k}(n) \Big) = \bell_{\hat\bH(n)}(n) - \bell_\bB(n), 
    \end{aligned}
    \end{equation*}
    where in the second step we used $\hat\bH(n)\triangle\bB\subsetneqq\hat\bH(n)\triangle\bA$ and $B^k=A^k$ on $\hat\bH(n)\triangle\bB$, and in the third step we used $\ell^k_{\hat H^k(n)}(n) = \max_{i\in[M]} \ell^k_i(n)$.
\end{proof}

A direct consequence of this proposition is that the na\"ive procedure in \eqref{a naive procedure} is equivalent to the following one:
\begin{equation} \label{the proposed procedure}
\begin{aligned}
    \hat T & := \inf_{n\geq 1} \Big\{
    \hat\bH(n) \in \cA \text{ and, for all } i\neq j\in[M], \, \bell_{\hat\bH(n)}(n) - \max_{\bA\in\widetilde\Alt_{i,j}(\hat\bH(n),\cA)} \bell_\bA(n) \geq a_{j,i}
    \Big\},  \\
    \hat\bD & := \hat\bH(\hat T), 
\end{aligned}
\end{equation}
which is the procedure we recommend in this work.
From now on, when we refer to the proposed procedure or $(\hat T,\hat\bD)$, we always mean the one in the form of \eqref{the proposed procedure}.

Note that, although the two expressions in \eqref{a naive procedure} and \eqref{the proposed procedure} are equivalent,
they lead to substantially different computational complexities. 
To see this, 
consider the example of no prior information, i.e., $\cA=[M]^K$.
In this case, for any $\bH\in[M]^K$ and $i\neq j\in[M]$, we have
\begin{equation} \label{Alt, no prior}
    \Alt_{i,j}(\bH,[M]^K) = \{\bA\in[M]^K: H_i\cap A_j\neq\emptyset\}, 
\end{equation}
and 
\begin{equation} \label{tildeAlt, no prior}
    \widetilde\Alt_{i,j}(\bH,[M]^K) = \{\bA\in[M]^K: \bH\triangle\bA = \{k\} \subseteq H_i\cap A_j\}.
\end{equation}
The size of the former is equal to $M^K$ minus the number of ways of putting $K$ different balls into $M$ different boxes such that $|H_i|$ specific balls are not put into one specific box, i.e., $M^K-(M-1)^{|H_i|}M^{K-|H_i|}$, which can be as large as $M^K-(M-1)^K$ when $|H_i|=K$,
whereas the size of the latter is simply $|H_i|$, which is less than or equal to $K$.

Before working out the specific form of $\widetilde\Alt_{i,j}(\bH,\cA)$ in concrete examples,
we first prove the reliability and asymptotic optimality of this procedure in the general case, which is the topic of the next two subsections.

\subsection{Error control}
In this subsection, we establish upper bounds on the error probabilities of the proposed procedure in \eqref{the proposed procedure} given thresholds, which in turn yield a selection of the thresholds in order to control the error probabilities below desired levels. 
Before starting, for any prior information $\cA$ and $i\neq j\in[M]$, we denote
\begin{equation*}
\begin{aligned}
    b_{i,j}(\cA) & := \max_{\bH\in\cA} \, |\Alt_{i,j}(\bH,\cA)|, \\
    \tilde b_{i,j}(\cA) & := \max_{\bH\in\cA} \, |\widetilde\Alt_{i,j}(\bH,\cA)|,
\end{aligned}
\end{equation*}
which is the maximum number of comparisons required in order to avoid type-$(j,i)$ errors if the procedure in \eqref{a naive procedure} or \eqref{the proposed procedure} is used, respectively.

\begin{theorem} \label{theorem, error control}
    For any prior information $\cA$, hypothesis configuration $\bH\in\cA$, and thresholds $\ba$, we have 
    \begin{equation*}
        \bPro_\bH(\hat T<\infty) = 1,
    \end{equation*}
    and 
    \begin{equation*}
        \bPro_\bH(H_i\cap\hat D_j\neq\emptyset) \leq |\Alt_{i,j}(\bH,\cA)| e^{-a_{i,j}} \text{ for all } i\neq j\in[M].
    \end{equation*}
    Therefore, for any error levels $\balpha$, $(\hat T,\hat\bD)\in\Delta(\balpha,\cA)$ if we select
    \begin{equation} \label{selection of thresholds}
        a_{i,j} = |\log\alpha_{i,j}| + \log b_{i,j}(\cA) \text{ for all } i\neq j\in[M].
    \end{equation}
\end{theorem}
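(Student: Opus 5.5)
The plan is to prove the two claims separately. Almost sure termination will come from the strong law of large numbers. The error bound will come from a change of measure (Wald's likelihood ratio identity) applied on the event that a specific wrong configuration is selected. Throughout, the Proposition lets me replace the maximum over $\widetilde\Alt_{j,i}$ in \eqref{the proposed procedure} by the maximum over the full $\Alt_{j,i}$. So at time $\hat T$ the decision $\hat\bD=\hat\bH(\hat T)$ beats every configuration in $\Alt_{j,i}(\hat\bD,\cA)$ by at least $a_{i,j}$.

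\emph{Termination.} Fix $\bH\in\cA$. For each $k$ and $j\neq H^k$, the SLLN gives $(\ell^k_{H^k}(n)-\ell^k_j(n))/n\to I^k_{H^k,j}>0$ $\bPro_\bH$-a.s. Hence, almost surely, $\hat H^k(n)=H^k$ for all $k$ and all large $n$, so eventually $\hat\bH(n)=\bH\in\cA$. Similarly, for every $\bA\in\cA\backslash\{\bH\}$ we have $(\bell_\bH(n)-\bell_\bA(n))/n\to\bI_\bH(\bA)>0$. Since $\cA$ is finite, all the differences $\bell_\bH(n)-\max_{\bA\in\Alt_{i,j}(\bH,\cA)}\bell_\bA(n)$ eventually exceed the fixed thresholds $a_{j,i}$. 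Therefore $\hat T<\infty$ a.s.

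\emph{Error bound.} Fix $i\neq j$. On $\{H_i\cap\hat D_j\neq\emptyset\}$, the decision $\hat\bD$ lies in $\cA$ and in $\Alt_{i,j}(\bH,\cA)$. Equivalently, $\bH\in\Alt_{j,i}(\hat\bD,\cA)$. The stopping rule for the pair $(j,i)$ applied at the MLC $\hat\bD$ uses threshold $a_{i,j}$. Hence on this event
\[
\bell_{\hat\bD}(\hat T)-\bell_\bH(\hat T)\geq a_{i,j}.
\]
I then decompose
\[
\bPro_\bH(H_i\cap\hat D_j\neq\emptyset)=\sum_{\bA\in\Alt_{i,j}(\bH,\cA)}\bPro_\bH(\hat\bD=\bA,\hat T<\infty).
\]
For each term I change measure to $\bPro_\bA$. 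Because $\{\hat\bD=\bA,\hat T=n\}\in\bcF(n)$ and the likelihood ratio of $\bPro_\bH$ to $\bPro_\bA$ on $\bcF(n)$ is $e^{\bell_\bH(n)-\bell_\bA(n)}$, I get
\[
\bPro_\bH(\hat\bD=\bA,\hat T<\infty)=\bExp_\bA\big[e^{\bell_\bH(\hat T)-\bell_\bA(\hat T)};\hat\bD=\bA,\hat T<\infty\big]\leq e^{-a_{i,j}}.
\]
Summing over $\bA$ gives the bound $|\Alt_{i,j}(\bH,\cA)|e^{-a_{i,j}}$.

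The only care needed is in this change of measure. It is done on each event $\{\hat T=n\}$ separately and then summed over $n$, so no integrability issue arises even though $\hat T$ may be infinite with positive $\bPro_\bA$-probability (we do not need $\bA\in\cA$ for this). Also, the densities of the $f^k_i$ are mutually absolutely continuous on the relevant sets, since the KL divergences are finite.

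\emph{Threshold choice.} With $a_{i,j}=|\log\alpha_{i,j}|+\log b_{i,j}(\cA)$ and $|\Alt_{i,j}(\bH,\cA)|\leq b_{i,j}(\cA)$, each error probability is at most $\alpha_{i,j}$. Combined with termination, this gives $(\hat T,\hat\bD)\in\Delta(\balpha,\cA)$.

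The main obstacle I expect is the bookkeeping of indices: checking that the pair $(j,i)$ in the stopping rule, evaluated at $\hat\bD$, is exactly the constraint that separates $\hat\bD$ from the truth $\bH$, and that the threshold it carries is $a_{i,j}$.
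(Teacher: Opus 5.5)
Your proof is correct. Its structure matches the paper's:
\begin{itemize}
\item termination via positive-drift random walks (the paper bounds $\hat T$ by the time $\hat T(\bH)$ at which the MLC equals $\bH$ and all thresholds are crossed, which is your SLLN argument);
\item the decomposition $\{H_i\cap\hat D_j\neq\emptyset\}=\bigcup_{\bA\in\Alt_{i,j}(\bH,\cA)}\{\hat\bD=\bA\}$;
\item the key inclusion $\{\hat\bD=\bA\}\subseteq\{\bell_\bA(\hat T)-\bell_\bH(\hat T)\geq a_{i,j}\}$, which the paper also obtains through $\bH\in\Alt_{j,i}(\bA,\cA)$.
\end{itemize}

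You differ only in the final probabilistic step. The paper enlarges the event to $\{\exists\,n\geq 1:\bell_\bA(n)-\bell_\bH(n)\geq a_{i,j}\}$ and applies Ville's maximal inequality to the nonnegative $\bPro_\bH$-martingale $e^{\bell_\bA(n)-\bell_\bH(n)}$. The paper's text calls $\bell_\bA(n)-\bell_\bH(n)$ itself the martingale, but the exponential is what is meant. You instead change measure to $\bPro_\bA$ on each event $\{\hat T=n,\hat\bD=\bA\}$, i.e., Wald's likelihood ratio identity.

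What each route buys:
\begin{itemize}
\item Your route gives the marginally sharper bound $e^{-a_{i,j}}\bPro_\bA(\hat\bD=\bA,\hat T<\infty)$. It does require the exact density $d\bPro_\bH/d\bPro_\bA$ on $\bcF(n)$, which you correctly justify from the finiteness of the KL divergences.
\item The paper's route discards the stopping rule once the inclusion is made, so it only needs a time-uniform crossing bound for the evidence process. This is why it carries over verbatim to the composite setting of Section \ref{sec: extension to composite hypotheses}. There $\bell_\bA$ is a generalized log-likelihood and no single measure $\bPro_\bA$ is available to change to, so assumption \eqref{composite, ville-type assumption} simply takes the place of Ville's inequality.
\end{itemize}
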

\begin{proof}
    See Appendix \ref{appendix, proofs}.
\end{proof}

Note that, although the implementation of the proposed procedure requires only $\tilde b_{i,j}(\cA)$ comparisons, the universal selection of thresholds in \eqref{selection of thresholds} still involves the constant $b_{i,j}(\cA)$.
Similar to other works on sequential multiple testing, this selection, while sufficient for ensuring asymptotic optimality, can be very conservative in practice.
Thus, it is ideal to use Monte-Carlo simulations offline to find sharper thresholds that approximately equate the actual error probabilities with the target levels.
Since the target levels are usually small, estimating the error probabilities falls into the regime of rare event simulation, where importance sampling is an efficient tool.
Please see \cite{Siegmumd_IS, Bucklew_Book}, \cite[Chapter 4]{Song_prior} and \cite{Song_2025Efficient} for references. 
This will be further discussed in the numerical studies of Section \ref{sec: numerical studies}.


\subsection{Asymptotic optimality}
In this subsection, we establish an asymptotic upper bound on the expected sample size of the proposed procedure as its thresholds go to infinity. This, combined with the asymptotic lower bound on the minimum expected sample size and the reliability of the proposed procedure, constitutes a complete asymptotic optimality theory. 

\begin{theorem} \label{theorem, AUB}
    For any prior information $\cA$ and hypothesis configuration $\bH\in\cA$, we have
    \begin{equation} \label{expression, AUB}
        \bExp_\bH[\hat T] \lesssim \max_{i\neq j\in[M]} \frac{a_{j,i}}{\bI_{i,j}(\bH,\cA)}
    \end{equation}
    as thresholds $a_\min:=\min_{i\neq j\in[M]} a_{i,j}\to\infty$.
\end{theorem}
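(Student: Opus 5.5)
Fix $\cA$ and $\bH\in\cA$, and write $N_{\ba}:=\max_{i\neq j\in[M]} a_{j,i}/\bI_{i,j}(\bH,\cA)$. The plan is a standard two-part argument. First, use the strong law of large numbers with a complete-convergence (rate) strengthening to show that after a random time with finite expectation the maximum likelihood configuration is $\bH$. Second, show that from then on each statistic in the stopping rule grows linearly at rate at least $\bI_{i,j}(\bH,\cA)$. By the Proposition above we may work with the naive form \eqref{a naive procedure} and the full sets $\Alt_{i,j}$. We then have
\begin{equation*}
\hat T\leq \inf\Big\{n\geq 1:\ \hat\bH(n)=\bH \text{ and } \min_{\bA\in\Alt_{i,j}(\bH,\cA)}\big(\bell_\bH(n)-\bell_\bA(n)\big)\geq a_{j,i}\ \forall\, i\neq j\Big\},
\end{equation*}
because on $\{\hat\bH(n)=\bH\}$ we automatically have $\hat\bH(n)\in\cA$.

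\textbf{Step 1 (recovery of $\bH$).} Define $\tau:=\sup\{n\geq1:\hat\bH(n)\neq\bH\}$. For each stream $k$ and each $m\neq H^k$, the difference $\ell^k_{H^k}(n)-\ell^k_m(n)$ is a random walk with drift $I^k_{H^k,m}>0$, and its increments have finite mean under \eqref{only assumption}. To bound $\bExp_\bH[\tau]$ I will use the last-exit-time result for random walks with positive drift: a walk $S_n$ with $\Exp[S_1]>0$ satisfies $\Exp[\sup\{n: S_n\le 0\}]<\infty$ as soon as $\Exp[(S_1^-)^2]<\infty$. If only a first moment is available, I will instead use the first-order $r$-quick convergence route, which needs $\sum_n \Pro(|S_n/n-\mu|>\epsilon)<\infty$. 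Applying this over the finitely many pairs $(k,m)$ gives $\bExp_\bH[\tau]<\infty$. I expect this moment issue to be the main technical obstacle: the only assumption is finiteness of the KL divergences, which controls $\Exp[S_1]$ but not higher moments of the negative part. My first attempt will be to note that $\log(f_m/f_{H^k})$ has exponential moments of order $1$ under $f_{H^k}$, because $\Exp[f_m/f_{H^k}]\leq 1$. This gives a light-tailed negative part of $S_1$, and a Chernoff bound then yields $\Pro(S_n\le n\epsilon')\le e^{-cn}$ for suitable $\epsilon'<\mu$.

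\textbf{Step 2 (linear growth and the time bound).} Let $\epsilon>0$. For every $\bA\in\cA$ set
\begin{equation*}
\sigma_\bA:=\sup\{n\ge1:\ \bell_\bH(n)-\bell_\bA(n)<(1-\epsilon)n\,\bI_\bH(\bA)\}.
\end{equation*}
Each $\sigma_\bA$ has finite expectation by the same argument as in Step 1, since $\bell_\bH-\bell_\bA$ is a sum over $k\in\bH\triangle\bA$ of the walks from Step 1 and therefore has drift $\bI_\bH(\bA)$ and light-tailed negative part. Put $\sigma:=\tau\vee\max_{\bA\in\cA}\sigma_\bA$; then $\bExp_\bH[\sigma]<\infty$ because $\cA$ is finite. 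For $n>\sigma$ the MLC equals $\bH$, and for every $i\neq j$ we get $\min_{\bA\in\Alt_{i,j}}(\bell_\bH-\bell_\bA)(n)\ge(1-\epsilon)n\,\bI_{i,j}(\bH,\cA)$ by the definition \eqref{bIij(bHcA)}. Hence
\begin{equation*}
\hat T\le \sigma+1+\max_{i\ne j}\frac{a_{j,i}}{(1-\epsilon)\bI_{i,j}(\bH,\cA)},
\end{equation*}
and therefore $\bExp_\bH[\hat T]\le \bExp_\bH[\sigma]+1+N_{\ba}/(1-\epsilon)$.

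\textbf{Step 3 (conclusion).} As $a_\min\to\infty$ we have $N_{\ba}\to\infty$, while $\bExp_\bH[\sigma]$ does not depend on $\ba$. Consequently $\limsup \bExp_\bH[\hat T]/N_{\ba}\le 1/(1-\epsilon)$. Letting $\epsilon\downarrow0$ gives \eqref{expression, AUB}. The same bound also yields $\bPro_\bH(\hat T<\infty)=1$, consistent with Theorem \ref{theorem, error control}.
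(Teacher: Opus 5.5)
Your proof is correct and uses the same ingredients as the paper's proof. These are: reducing $\hat T$ to the time $\hat T(\bH)$ at which the MLC equals $\bH$ and all comparisons clear their thresholds; using $\bI_\bH(\bA)\ge\bI_{i,j}(\bH,\cA)$ for $\bA\in\Alt_{i,j}(\bH,\cA)$, so every log-likelihood gap grows at rate at least $\bI_{i,j}(\bH,\cA)$; and the Chernoff bound over $\theta\in(-1,0)$, made possible by $\Exp^k_{H^k}[f^k_m/f^k_{H^k}]\le 1$, which is exactly Lemma~\ref{lemma for AUB}.

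The difference is in the bookkeeping. The paper writes $\bExp_\bH[\hat T(\bH)]=\sum_n\bPro_\bH(\hat T(\bH)>n)$ and truncates the sum at $N_{\ba}/(1-\epsilon)$. It then bounds each remaining term by the probability that the stopping condition fails at the single time $n$. So it only needs the marginal deviation probabilities to be summable (complete convergence). You instead build a threshold-free last-exit time $\sigma$ and obtain the pathwise bound $\hat T\le\sigma+1+N_{\ba}/(1-\epsilon)$. This needs $\bExp_\bH[\sigma]<\infty$, i.e.\ summability of $\bPro_\bH(\text{a deviation occurs at some } m\ge n)$ (1-quick convergence). That condition implies complete convergence but is in general strictly stronger.

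For the i.i.d.\ log-likelihood walks in this theorem both conditions hold, thanks to the geometric Chernoff decay. Your version gives almost-sure finiteness and a clean pathwise statement for free. The paper's version is more economical, and this matters in Section~\ref{sec: extension to composite hypotheses}. There the same proof is reused under the complete-convergence condition \eqref{composite, complete assumption}, for GLR-type statistics that are not random walks. Your last-exit argument would require strengthening that assumption.

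A small correction to your Step~1 hedge: $\sum_n\Pro(|S_n/n-\mu|>\epsilon)<\infty$ is complete convergence, not 1-quick convergence. The latter means finiteness of $\Exp[\sup\{n:|S_n/n-\mu|>\epsilon\}]$, and a first moment alone guarantees neither. The fallback is moot anyway, since, as you note, an exponential moment of the negative part is automatic under \eqref{only assumption}.
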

\begin{proof}
    See Appendix \ref{appendix, proofs}.
\end{proof}

\begin{corollary} \label{corollary, AO}
    For any prior information $\cA$, if we select the thresholds $\ba$ so that $(\hat T,\hat\bD)\in\Delta(\balpha,\cA)$ for all error levels $\balpha$ and $a_{i,j}\sim |\log\alpha_{i,j}|$ for all $i\neq j\in[M]$ as $\alpha_\max\to 0$, e.g., as in \eqref{selection of thresholds}, then for any $\bH\in\cA$, we have 
    \begin{equation*}
        \bExp_\bH[\hat T] \sim \max_{i\neq j\in[M]} \frac{|\log\alpha_{j,i}|}{\bI_{i,j}(\bH,\cA)} \sim
        \cL_\bH(\balpha,\cA)
    \end{equation*}
    as $\alpha_\max\to 0$.
\end{corollary}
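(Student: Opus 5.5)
The plan is to sandwich $\bExp_\bH[\hat T]$ between the two asymptotic bounds already available and then show that these bounds coincide.

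\emph{Upper bound.} Fix $\bH\in\cA$. By assumption the thresholds make $(\hat T,\hat\bD)\in\Delta(\balpha,\cA)$ and satisfy $a_{i,j}\sim|\log\alpha_{i,j}|$. Since $\alpha_{i,j}\le\alpha_\max\to 0$, every $|\log\alpha_{i,j}|\to\infty$, and hence $a_\min\to\infty$. Theorem \ref{theorem, AUB} therefore applies and gives
\[
\bExp_\bH[\hat T]\lesssim \max_{i\neq j} \frac{a_{j,i}}{\bI_{i,j}(\bH,\cA)}.
\]
In this expression the index set is finite and each denominator is a fixed positive constant. Because $a_{j,i}=(1+o(1))|\log\alpha_{j,i}|$ uniformly over the finitely many pairs, the maximum of the ratios is asymptotically equivalent to
\[
\max_{i\neq j} \frac{|\log\alpha_{j,i}|}{\bI_{i,j}(\bH,\cA)}.
\]
I will check this equivalence explicitly: with $\epsilon$ bounds, $(1-\epsilon)x_{ij}\le y_{ij}\le(1+\epsilon)x_{ij}$ for all pairs implies the same inequalities for the maxima.

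For the selection \eqref{selection of thresholds}, Theorem \ref{theorem, error control} gives membership in $\Delta(\balpha,\cA)$ directly. Moreover $a_{i,j}\sim|\log\alpha_{i,j}|$ holds because $\log b_{i,j}(\cA)$ is a constant, bounded by $K\log M$.

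\emph{Lower bound.} Since $(\hat T,\hat\bD)\in\Delta(\balpha,\cA)$, the definition of $\cL$ gives $\bExp_\bH[\hat T]\ge\cL_\bH(\balpha,\cA)$. Theorem \ref{theorem, LB} then yields
\[
\cL_\bH(\balpha,\cA)\gtrsim\max_{i\neq j} \frac{|\log\alpha_{j,i}|}{\bI_{i,j}(\bH,\cA)}.
\]
One point needs care here: the asymptotic statement of Theorem \ref{theorem, LB} uses $\varphi(\alpha_{\mathrm{sum}},\alpha_{j,i})\sim|\log\alpha_{j,i}|$ as both arguments tend to zero. I will verify this directly. First, $\alpha_{\mathrm{sum}}\to0$ because it is a finite sum of terms each at most $\alpha_\max$. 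Second, the remaining terms of $\varphi$ are bounded: $x\log(x/(1-y))\to0$, and $(1-x)\log(1-x)\to0$.

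\emph{Conclusion.} Chaining the inequalities gives
\[
\max\frac{|\log\alpha_{j,i}|}{\bI_{i,j}}\lesssim\cL_\bH\le\bExp_\bH[\hat T]\lesssim\max\frac{|\log\alpha_{j,i}|}{\bI_{i,j}},
\]
so all three quantities are asymptotically equivalent. The only mildly delicate step is passing $\sim$ through the maximum over pairs and confirming that $a_\min\to\infty$. No individual pair's rate is needed; finiteness of the pair set and $\bI_{i,j}(\bH,\cA)\in(0,\infty)$ suffice. The latter follows from assumption \eqref{only assumption}, provided $\Alt_{i,j}(\bH,\cA)$ is nonempty. If some $\Alt_{i,j}(\bH,\cA)$ is empty, I adopt the convention that $\bI_{i,j}=\infty$ and drop that pair from both maxima consistently.
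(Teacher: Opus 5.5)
Your proposal is correct and matches the paper's argument, which simply combines Theorem \ref{theorem, LB} (via $\cL_\bH(\balpha,\cA)\le\bExp_\bH[\hat T]$), Theorem \ref{theorem, error control} (membership in $\Delta(\balpha,\cA)$ for the choice \eqref{selection of thresholds}), and Theorem \ref{theorem, AUB} (with $a_\min\to\infty$). You only spell out the routine details the paper leaves implicit: passing $\sim$ through the finite maximum, and checking that $\varphi(\alpha_{\operatorname{sum}},\alpha_{j,i})\sim|\log\alpha_{j,i}|$.
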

\begin{proof}
    Combining Theorem \ref{theorem, LB}, \ref{theorem, error control} and \ref{theorem, AUB}.
\end{proof}

\section{Examples} \label{sec: examples}
In this section, we specialize the general results of the previous sections to concrete examples. 
We first introduce the following notation:
For any $i\neq j\in[M]$, denote
\begin{equation} \label{lambdai,j(n)}
    \lambda_{i,j}(n) := \min_{k\in\hat H_i(n)} \left\{ \ell^k_i(n) - \ell^k_j(n) \right\},
\end{equation}
which is the weakest evidence among those streams in favor of hypothesis $i$ against hypothesis $j$,
and, for any $\bH\in[M]^K$, denote
\begin{equation} \label{bIij(bH)}
    \bI_{i,j}(\bH) := \min_{k\in H_i} I^k_{i,j}.
\end{equation}

\subsection{No prior information}
The first example considers the case of no prior information, i.e., $\cA=[M]^K$.
The forms of $\Alt_{i,j}(\bH,[M]^K)$ and $\widetilde\Alt_{i,j}(\bH,[M]^K)$ for $i\neq j\in[M]$ have been derived in \eqref{Alt, no prior} and \eqref{tildeAlt, no prior}, 
and we have shown that $b_{i,j}([M]^K)=M^K-(M-1)^K\gg \tilde b_{i,j}([M]^K)=K$.

Note that the exact mathematical expression of $\widetilde\Alt_{i,j}(\bH,\cA)$, even in this simple example, is rather tedious.
Thus, we use $\{i\to j\}_\bH$ to mean that, relative to $\bH$, exactly one stream changes its hypothesis affiliation from $i$ to $j$ (while all other streams stay the same), 
and equate $\widetilde\Alt_{i,j}(\bH,[M]^K)=\{i\to j\}_\bH$.
Similarly, for any $2\leq m\leq M$ and distinct $i_1,\ldots,i_m\in[M]$, we use $\{i_1\to i_2\to\cdots\to i_m\}_\bH$ to mean that, relative to $\bH$, exactly one stream changes its hypothesis affiliation from $i_1$ to $i_2$, $\ldots$, and exactly one stream changes from $i_{m-1}$ to $i_m$, 
and use $\{i_1\to i_2\to\cdots\to i_m\to i_1\}_\bH$ to mean the same with the additional requirement that exactly one stream changes from $i_m$ to $i_1$, 
i.e., 
\begin{equation*}
\begin{aligned}
    & \, \{i_1\to i_2\to\cdots\to i_m\}_\bH \\
    := & \, \{ \bA\in[M]^K: \bH\triangle\bA=\{k_{i_1},\ldots,k_{i_{m-1}}\} \text{ where } k_{i_l}\in H_{i_l}\cap A_{i_{l+1}} \text{ for } 1\leq l\leq m-1 \}, \\
    & \, \{i_1\to i_2\to\cdots\to i_m\to i_1\}_\bH \\
    := & \, \{ \bA\in[M]^K: \bH\triangle\bA=\{k_{i_1},\ldots,k_{i_m}\} \text{ where } k_{i_l}\in H_{i_l}\cap A_{i_{l+1}} \text{ for } 1\leq l\leq m-1 \text{ and } k_{i_m}\in H_{i_m}\cap A_{i_1} \}.
\end{aligned}
\end{equation*}
Intuitively, the former means that exactly one stream in each of $H_{i_1},\ldots,H_{i_m}$ \emph{shift} their affiliations, 
and the latter means that exactly one stream in each of them \emph{rotate} their affiliations. 
In particular, we write $\{i\to j\to i\}_\bH$ as $\{i\leftrightarrow j\}_\bH$, which means that exactly one stream in $H_i$ and $H_j$ \emph{exchange} their affiliations. 


Meanwhile,
for an $\bA\in\{i\to j\}_\bH$ such that $\bH\triangle\bA=\{k\}\in H_i$, it is clear that 
$\bI_\bH(\bA)=I^k_{i,j}$, so 
\begin{equation*}
    \min_{\bA\in\{i\to j\}_\bH} \bI_\bH(\bA) = \bI_{i,j}(\bH),
\end{equation*}
where the latter was defined in \eqref{bIij(bH)},
and that
$\bell_\bH(n)-\bell_\bA(n) = \ell^k_i(n) - \ell^k_j(n)$, so
\begin{equation*}
    \bell_\bH(n) - \max_{\bA\in\{i\to j\}_\bH} \bell_\bA(n) = \min_{k\in H_i} \midbig\{ \ell^k_i(n) - \ell^k_j(n)\midbig\},
\end{equation*}
which is equal to $\lambda_{i,j}(n)$ in \eqref{lambdai,j(n)} if we replace $\bH$ by $\hat \bH(n)$.
Similarly, 
for an $\bA\in\{i_1\to i_2\to\cdots\to i_m\}_\bH$, where either $i_1,\ldots,i_m$ are distinct or $i_1,\ldots,i_{m-1}$ are distinct and $i_m=i_1$, such that $\bH\triangle\bA=\{k_{i_1},\ldots,k_{i_{m-1}}\}$ and 
$k_{i_l}\in H_{i_l}\cap A_{i_{l+1}}$ for $1\leq l\leq m-1$, 
we have $\bI_\bH(\bA)=\sum_{l=1}^{m-1} I^{k_{i_l}}_{i_l,i_{l+1}}$, so 
\begin{equation*}
    \min_{\bA\in\{i_1\to i_2\to\cdots\to i_m\}_\bH} \bI_\bH(\bA) = \sum_{l=1}^{m-1} \bI_{i_l,i_{l+1}}(\bH), 
\end{equation*}
and have
$\bell_\bH(n)-\bell_\bA(n) = \sum_{l=1}^{m-1} \midbig\{ \ell^{k_{i_l}}_{i_l}(n)-\ell^{k_{i_l}}_{i_{l+1}}(n) \midbig\}$, so 
\begin{equation*}
\begin{aligned}
    & \bell_\bH(n) - \max_{\bA\in\{i_1\to i_2\to\cdots\to i_m\}_\bH} \bell_\bA(n) 
    = \sum_{l=1}^{m-1} \min_{k_{i_l}\in H_{i_l}} \midbig\{ \ell^{k_{i_l}}_{i_l}(n) - \ell^{k_{i_l}}_{i_{l+1}}(n) \midbig\},
\end{aligned}
\end{equation*}
which is equal to $\sum_{l=1}^{m-1} \lambda_{i_l,i_{l+1}}(n)$ if we replace $\bH$ by $\hat\bH(n)$.
These notations and observations will greatly simplify our expressions in subsequent analysis. 

Returning to our current example with no prior information, it is clear that the generic expression of $\hat T$ in \eqref{the proposed procedure} can be specified as 
\begin{equation*} 
\begin{aligned}
    \hat T = & \inf_{n\geq 1}\midbig\{ \bell_{\hat\bH(n)}(n) - \max_{\bA\in\{i\to j\}_{\hat\bH(n)}} \bell_\bA(n) \geq a_{j,i} \text{ for all } i\neq j\in[M] \midbig\} \\
    = & \inf_{n\geq 1} \midbig\{ \lambda_{i,j}(n) \geq a_{j,i} \text{ for all } i\neq j\in[M] \midbig\}.
\end{aligned}
\end{equation*}
That is, we sample until the streams form $M$ groups, $\hat H_1(n),\ldots,\hat H_M(n)$, and for every group the minimum evidences in favor of its label and against all other labels exceed the corresponding thresholds.
In particular, when there is only one stream, i.e., $K=1$, this reduces to the multihypothesis sequential probability ratio test (see, e.g., \cite[Chapter 4]{Tartakovsky_Book}).
Besides, the constants in the asymptotic approximation to the optimal expected sample size in \eqref{bIij(bH,widetildecA)} are 
\begin{equation*}
    \bI_{i,j}(\bH,[M]^K) = \bI_{i,j}(\bH) \text{ for all } i\neq j\in[M].
\end{equation*}

\subsection{Known exact numbers}
The second example considers the case of known number of streams following each hypothesis, i.e., $\cA=\cA^{\text{exact}}_{K_1,\ldots,K_M}$ in \eqref{def, known exact}.
Throughout this subsection, we treat $K_1,\ldots,K_M$ as fixed and use $\cA^{\text{exact}}$ to denote the prior information. 
We also fix arbitrary $\bH\in\cA^{\text{exact}}$ and $i\neq j\in[M]$.

To study the composition of $\widetilde\Alt_{i,j}(\bH,\cA^{\text{exact}})$, let us first consider the case of two hypotheses in each stream, i.e., $M=2$.
It is clear that, in this case, when a type-$(1,2)$ error occurs relative to $\bH$, a type-$(2,1)$ error must occur simultaneously
so that the number of streams following each hypothesis remains consistent with the prior information, i.e., errors occur in \emph{pairs}.
Based on our notations, we have
\begin{equation*}
    \widetilde\Alt_{1,2}(\bH,\cA^{\text{exact}})=\widetilde\Alt_{2,1}(\bH,\cA^{\text{exact}})=\{1\leftrightarrow 2\}_\bH.
\end{equation*}
This observation has been exploited by numerous works that incorporate this kind of prior information into multiple testing with two hypotheses for each stream, e.g., \cite{Kobi_2015_AO, Song_prior, Bartroff_2021_EquivErrorMetrics}.

However, when there are multiple hypotheses in each stream, i.e., $M\geq 3$, errors occur not only in pairs, but also in \emph{cycles}.
E.g., if a stream in $H_i$ is misidentified as following $H_j$, a stream in $H_j$ is misidentified as following $H_\ano$, and a stream in $H_\ano$ is misidentified as following $H_i$, then a type-$(i,j)$ error occurs, the prior information is respected, but there is not a pair of streams in $H_i$ and $H_j$ that are exchanged, but a cycle of streams in $H_i, H_j$ and $H_\ano$ that are rotated.
Based on our notations, this form of cyclic errors is denoted as $\{i\to j\to l\to i\}_\bH$. 
This cycle can be of length up to $M$ and in arbitrary order, as long as it starts with $i\rightarrow j\rightarrow$ and ends with $\rightarrow i$.
Formally, $\widetilde\cA_{i,j}(\bH,\cA^{\text{exact}})$ consists of all configurations with exactly one such cycle relative to $\bH$, and we express it as
\begin{equation*}
\begin{aligned}
    & \widetilde\cA_{i,j}(\bH,\cA^{\text{exact}}) = \bigcup_{\substack{2\leq m\leq M \\ \text{distinct } i_1,\ldots,i_m\in[M] \\ \text{with } i_1=i,\,i_2=j }} \{i_1\to i_2\to\cdots\to i_m\to i_1\}_\bH.
\end{aligned}
\end{equation*}
Its size is
\begin{equation*}
\begin{aligned}
    & \, |\tilde\cA_{i,j}(\bH,\cA^{\text{exact}})| = 
    \sum_{m=2}^M \sum_{\substack{\text{distinct} \\ i_1,\ldots,i_m\in[M] \\ \text{with } i_1=i,\,i_2=j}} K_{i_1}\cdots K_{i_m}.
\end{aligned}
\end{equation*}
Since this number is the same for all $\bH\in\cA^{\text{exact}}$, it is also $\tilde b_{i,j}(\cA^{\text{exact}})$. 
To see its order of magnitude, note that 
\begin{equation*}
\begin{aligned}
    & \, \tilde b_{i,j}(\cA^{\text{exact}}) \leq K_1\cdots K_M \sum_{m=2}^M \frac{(M-2)!}{(M-m)!} \\
    \leq & \, K_1\cdots K_M \cdot e (M-2)! \leq \left(\frac{K}{M}\right)^M \cdot e(M-2)! \leq K^M.
\end{aligned}
\end{equation*}
Besides, the constant in the asymptotic approximation is
\begin{equation*}
\begin{aligned}
    \bI_{i,j}(\bH,\cA^{\text{exact}}) = \min_{\substack{2\leq m\leq M \\ \text{distinct } i_1,\ldots,i_m\in[M] \\ \text{with } i_1=i,\,i_2=j,\,i_{m+1}:=i}} \sum_{\ano=1}^m \bI_{i_\ano,i_{\ano+1}}(\bH).
\end{aligned}
\end{equation*}

Moreover, in this case the generic stopping time in \eqref{the proposed procedure} is equivalent to 
\begin{equation*}
\begin{aligned}
    \hat T  
    = & \inf_{n\geq 1} \Big\{ |\hat H_i(n)|=K_i \,\,\forall\, i\in[M], \text{ and} \\
    & \text{for all $2\leq m\leq M$ and distinct } i_1,\ldots,i_m\in[M] \text{ with } i_{m+1}:=i_1, \\
    & \sum_{\ano=1}^m \lambda_{i_\ano,i_{\ano+1}}(n) \geq \max_{1\leq\ano\leq m} a_{i_{\ano+1},i_\ano}
    \Big\}.
\end{aligned}
\end{equation*}
That is, we sample until the streams form $M$ groups of correct sizes, and the evidences against making all cyclic errors (of length $m$ and cycle $i_1\rightarrow i_2 \rightarrow \cdots \rightarrow i_m\rightarrow i_1$ for all $m$ and $i_1,\ldots,i_m$) exceed the corresponding thresholds ($a_{i_2,i_1}\vee\cdots\vee a_{i_m,i_{m-1}}\vee a_{i_1,i_m}$).
In particular, 
when $M=2$, we have
\begin{equation*}
\begin{aligned}
    \hat T = \midbig\{ |\hat H_i(n)| & =K_i \text{ for both } i\in[2], \text{ and} \\
    & \lambda_{1,2}(n) + \lambda_{2,1}(n) \geq a_{2,1}\vee a_{1,2}
    \midbig\},
\end{aligned}
\end{equation*}
and 
\begin{equation*}
    \bI_{1,2}(\bH,\cA^{\text{exact}}) = \bI_{2,1}(\bH,\cA^{\text{exact}}) = \bI_{1,2}(\bH) + \bI_{2,1}(\bH),
\end{equation*}
which recovers the ``gap rule" in \cite{Song_prior}.
When $M=3$, we have
\begin{equation} \label{hat T, known numbers and M=3}
\begin{aligned}
    \hat T = \inf_{n\geq 1} \midbig\{ |\hat H_i(n)|=K_i \text{ for all } & i\in[3], \text{ and } \\
    \lambda_{1,2}(n) + \lambda_{2,1}(n) & \geq a_{2,1} \vee a_{1,2}, \\
    \lambda_{1,3}(n) + \lambda_{3,1}(n) & \geq a_{3,1} \vee a_{1,3}, \\
    \lambda_{2,3}(n) + \lambda_{3,2}(n) & \geq a_{3,2} \vee a_{2,3}, \\
    \lambda_{1,2}(n) + \lambda_{2,3}(n) + \lambda_{3,1}(n) & \geq a_{2,1} \vee a_{3,2} \vee a_{1,3}, \\
    \lambda_{1,3}(n) + \lambda_{3,2}(n) + \lambda_{2,1}(n) & \geq a_{3,1} \vee a_{2,3} \vee a_{1,2}
    \midbig\}, 
\end{aligned}
\end{equation}
and, e.g., when $i=1$, $j=2$, 
\begin{equation} \label{bI, two known numbers}
\begin{aligned}
    \bI_{1,2}(\bH,\cA^{\text{exact}}) & = \min\midbig\{ \bI_{1,2}^{\text{len}=2}(\bH,\cA^{\text{exact}}), \bI_{1,2}^{\text{len}=3}(\bH,\cA^{\text{exact}}) \midbig\},
\end{aligned}
\end{equation}
where 
\begin{equation*}
\begin{aligned}
    \bI_{1,2}^{\text{len}=2}(\bH,\cA^{\text{exact}}) & = \bI_{1,2}(\bH) + \bI_{2,1}(\bH), \\
    \bI_{1,2}^{\text{len}=3}(\bH,\cA^{\text{exact}}) & = \bI_{1,2}(\bH) + \bI_{2,3}(\bH) + \bI_{3,1}(\bH).
\end{aligned}
\end{equation*}
Note that, because Kullback-Leibler divergences do not satisfy the triangle inequality, i.e., $KL(p,r)\leq KL(p,q)+KL(q,r)$ does not necessarily hold for distributions $p,q,r$, the length of the cycle that minimizes $\bI_{1,2}(\bH,\cA^{\text{exact}})$ is indeterministic. 
In the numerical studies of Section \ref{sec: numerical studies}, we will consider an example where the minimum Kullback-Leibler divergence of cycles of length $3$ is smaller than that of cycles of length $2$.


\subsection{Known lower bounds}
The third example considers the case of known lower bound on the number of streams following each hypothesis, i.e., $\cA=\cA^{\text{lower}}_{L_1,\ldots,L_M}$ in \eqref{def, known lower}.
Throughout this subsection, we treat $L_1,\cdots,L_M$ as fixed and use $\cA^{\text{lower}}$ to denote the prior information. 
We also fix arbitrary $\bH\in\cA^{\text{lower}}$ and $i\neq j\in[M]$.
For simplicity of presentation, we denote
\begin{equation*}
\begin{aligned}
    U(\bH) & := \{l\in[M]: |H_l|>L_l\}, \\
    V(\bH) & := \{l\in[M]: |H_i|=L_l\} = [M]\backslash U(\bH),
\end{aligned}
\end{equation*}
i.e., those hypotheses for which the number of streams following them is greater than the lower bound or equal to the lower bound, respectively. 

Let us consider how $\tilde\cA_{i,j}(\bH,\cA^{\text{lower}})$ is composed.
Note that, if $|H_i|>L_i$, then to make the fewest changes, it suffices to move one stream in $H_i$ to $H_j$, i.e., 
\begin{equation*}
    \widetilde\Alt_{i,j}(\bH,\cA^{\text{lower}}) = \{i\to j\}_\bH.
\end{equation*}
However, if $|H_i|=L_i$, after moving one stream in $H_i$ to $H_j$, the number of streams following hypothesis $i$ is not enough, so there must exist one stream in $[K]\backslash H_i$ that is moved back to $H_i$.
If that stream is from $H_u$ for some $u\in\{j\}\cup U(\bH)$, then there is nothing else we need to change, as the number of streams following each hypothesis now is consistent with the prior information. 
However, if that stream is from $H_v$ for some $v\in V(\bH)\backslash\{i,j\}$, then this operation makes the number of streams following hypothesis $v$ not enough, so there must exist another stream in $[K]\backslash(H_i\cup H_v)$ to compensate for it.
There are still two possible ways, either from $H_u$ for some $u\in\{j\}\cup U(\bH)$, or from $H_{v'}$ for some $v'\in V(\bH)\backslash\{i,j,v\}$.
Repeating this derivation, we end up with a backward chain of errors,
$j\leftarrow i\leftarrow v_m\leftarrow \cdots\leftarrow v_1\leftarrow u$ for some $v_1,\ldots,v_m\in V(\bH)\backslash\{i,j\}$ and $u\in\{j\}\cup U(\bH)$.
Note that when $u=j$, this chain is a loop, whereas when $u\neq j$, it is not. 
Formally, we have
\begin{equation*}
\begin{aligned}
    & \widetilde\Alt_{i,j}(\bH,\cA^{\text{lower}}) 
    = \bigcup_{\substack{ m\geq 0, \, u\in\{j\}\cup U(\bH) \\ \text{distinct } v_1,\ldots,v_m\in V(\bH)\backslash\{i,j\} }} \{u\to v_1\to\cdots\to v_m\to i\to j\}_\bH,
\end{aligned}    
\end{equation*}
whose size is also upper bounded by $K^M$.
Besides,
the constant in the asymptotic approximation, i.e., $\bI_{i,j}(\bH,\cA^{\text{lower}})$, is $\bI_{i,j}(\bH)$ if $|H_i|>L_i$, and 
\begin{equation*}
\begin{aligned}
    \min_{\substack{ m\geq 0, u\in\{j\}\cup U(\bH) \\ \text{distinct } v_1,\ldots,v_m\in V(\bH)\backslash\{i,j\} }} \Big\{ \bI_{u,v_1}(\bH) + \sum_{l=1}^{m-1} \bI_{v_l,v_{l+1}}(\bH) + \bI_{v_m,i}(\bH) \Big\} + \bI_{i,j}(\bH)
\end{aligned}
\end{equation*}
if $|H_i|=L_i$.

Moreover, the generic stopping time in \eqref{the proposed procedure} in this case is equivalent to 
\begin{equation*}
\begin{aligned}
    \hat T = & \inf_{n\geq 1} \Big\{ |\hat H_i(n)|\geq L_i \,\,\forall\,i\in[M], \text{ and} \\
    & \text{for all } m\geq 0, \text{ and distinct }
    u\in U(\hat\bH(n)), w\in [M], \text{ and } v_1,\ldots,v_m\in V(\hat\bH(n)), \\
    & \lambda_{u,v_1}(n)+\lambda_{v_1,v_2}(n)+\cdots+\lambda_{v_m,w}(n) \geq \max\{a_{v_1,u},a_{v_2,v_1},\ldots,v_{w,v_m}\}
    \Big\}.
\end{aligned}
\end{equation*}
In particular, when $M=2$, we have
\begin{equation*}
\begin{aligned}
    \hat T & = \hat T_{\emptyset} \wedge \hat T_1 \wedge \hat T_2 \wedge \hat T_{1,2},
\end{aligned}
\end{equation*}
where 
the subscript represents $\{i\in[M]:|\hat H_i(\cdot)|=L_i\}$, and
\begin{equation*}
\begin{aligned}
    \hat T_\emptyset = & \inf_{n\geq 1} \midbig\{ |\hat H_1(n)|>L_1, \, |\hat H_2(n)|>L_2, \, \lambda_{1,2}(n) \geq a_{2,1}, \, \lambda_{2,1}(n) \geq a_{1,2} \midbig\}, \\
    \hat T_1 = & \inf_{n\geq 1} \midbig\{ |\hat H_1(n)|=L_1, \, |\hat H_2(n)|>L_2, \, \lambda_{1,2}(n)+\lambda_{2,1}(n) \geq a_{2,1}\vee a_{1,2}, \, \lambda_{2,1}(n) \geq a_{1,2}
    \midbig\}, \\
    \hat T_2 = & \text{ analogous to $\hat T_1$}, \\
    \hat T_{1,2} = & \inf_{n\geq 1} \midbig\{ |\hat H_1(n)|=L_1, \, |\hat H_2(n)|=L_2, \, \lambda_{1,2}(n)+\lambda_{2,1}(n) \geq a_{2,1}\vee a_{1,2}
    \midbig\}.
\end{aligned}
\end{equation*}
Note that $\hat T_{1,2}$ occurs if and only if $L_1+L_2=K$, i.e., the prior information is actually known numbers.
Also note that this recovers the ``gap-intersection rule" in \cite{Song_prior}.
When $M=3$, we have
\begin{equation*}
    \hat T = \hat T_\emptyset \wedge \min_{i\in\{1,2,3\}} \hat T_i \wedge \min_{(i,j)\in\{(1,2),(2,3),(3,1)\}} \hat T_{i,j} \wedge \hat T_{1,2,3},
\end{equation*}
where 
\begin{equation*}
\begin{aligned}
    \hat T_\emptyset = & \inf_{n\geq 1} \midbig\{ |\hat H_1(n)|>L_1, |\hat H_2(n)|>L_2, |\hat H_3(n)|>L_3, \\
    & \lambda_{1,2}(n) \geq a_{2,1}, \, \lambda_{1,3}(n) \geq a_{3,1}, \,
    \lambda_{2,1}(n) \geq a_{1,2}, \, \lambda_{2,3}(n) \geq a_{3,2}, \,
    \lambda_{3,1}(n) \geq a_{1,3}, \, \lambda_{3,2}(n) \geq a_{2,3}
    \midbig\}, \\
    \hat T_1 = & \inf_{n\geq 1} \midbig\{ |\hat H_1(n)|=L_1, |\hat H_2(n)|>L_2, |\hat H_3(n)|>L_3, \\
    & \lambda_{2,1}(n)+\lambda_{1,2}(n)\geq a_{1,2}\vee a_{2,1}, \,
    \lambda_{3,1}(n)+\lambda_{1,2}(n)\geq a_{1,3}\vee a_{2,1}, \\
    & \lambda_{2,1}(n)+\lambda_{1,3}(n)\geq a_{1,2}\vee a_{3,1}, \,
    \lambda_{3,1}(n)+\lambda_{1,3}(n)\geq a_{1,3}\vee a_{3,1}, \\
    & \lambda_{2,1}(n) \geq a_{1,2}, \, \lambda_{2,3}(n) \geq a_{3,2}, \,
    \lambda_{3,1}(n) \geq a_{1,3}, \, \lambda_{3,2}(n) \geq a_{2,3}
    \midbig\}, \\
    \hat T_2, \hat T_3 = & \text{ analogous to $\hat T_1$}, \\
    \hat T_{1,2} = & \inf_{n\geq 1}\midbig\{ |\hat H_1(n)|=L_1, |\hat H_2(n)|=L_2, |\hat H_3(n)|>L_3, \\
    & \lambda_{2,1}(n)+\lambda_{1,2}(n)\geq a_{1,2}\vee a_{2,1}, \,
    \lambda_{3,1}(n)+\lambda_{1,2}(n)\geq a_{1,3}\vee a_{2,1}, \\
    & \lambda_{3,2}(n)+\lambda_{2,1}(n)+\lambda_{1,3}(n)\geq a_{2,3}\vee a_{1,2}\vee a_{3,1}, \,
    \lambda_{3,1}(n)+\lambda_{1,3}(n)\geq a_{1,3}\vee a_{3,1}, \\
    & \lambda_{3,2}(n)+\lambda_{2,1}(n)\geq a_{2,3}\vee a_{1,2}, \\
    & \lambda_{3,1}(n)+\lambda_{1,2}(n)+\lambda_{2,3}(n)\geq a_{1,3}\vee a_{2,1}\vee a_{3,2}, \,
    \lambda_{3,2}(n)+\lambda_{2,3}(n)\geq a_{2,3}\vee a_{3,2}, \\
    & \lambda_{3,1}(n)\geq a_{1,3}, \, \lambda_{3,2}(n)\geq a_{2,3}
    \midbig\}, \\
    \hat T_{1,3}, \hat T_{2,3} = & \text{ analogous to $\hat T_{1,2}$}, \\
    \hat T_{1,2,3} = & \text{ the stopping time in \eqref{hat T, known numbers and M=3}}.
\end{aligned}
\end{equation*}
Note that in the above expressions, some elements in the maximums of thresholds can be omitted,
e.g., in the last two lines of $\hat T_{1,2}$, $\lambda_{3,2}(n)\geq a_{2,3}$ implies $\lambda_{3,2}(n)+\lambda_{2,3}(n)\geq a_{2,3}$, so it suffices to have $\lambda_{3,2}(n)+\lambda_{2,3}(n)\geq a_{3,2}$ in the second last line, but we do not omit them for clarity.
Meanwhile, in each of the stopping times, all inequalities do not imply each other, so all criteria need to be checked. 
This is substantially more complex than the case of no prior information or known numbers, and than the case of two candidate hypotheses for each stream.

\subsection{Exclusive hypotheses}
The fourth example considers the case of exclusive hypotheses, i.e., $\cA=\cA^{\text{exclu}}_{i,j}$ in \eqref{def, exclu hyps} for some $i\neq j\in[M]$.
First note that, when $M=2$ this reduces to a binary testing problem of testing ``all streams follow hypothesis $1$" versus ``all streams follow hypothesis $2$", so we focus on $M\geq 3$ and, without loss of generality, throughout this subsection we assume that it is hypothesis $1$ and hypothesis $2$ that exclude each other, and denote the prior information as $\cA^{\text{exclu}}_{1,2}$.

Let us study how $\widetilde\Alt_{i,j}(\bH,\cA^{\text{exclu}}_{1,2})$, where $\bH\in\cA^{\text{exclu}}_{1,2}$ and $i\neq j\in[M]$, is composed.
According to the definition of $\cA^{\text{exclu}}_{1,2}$, there are the following three cases regarding $H_1$ and $H_2$, which we consider separately.

(I) $H_1=H_2=\emptyset$. In this case, it suffices to consider $i\in[M]\backslash\{1,2\}$ and $j\in[M]\backslash\{i\}$, and for any such indices, we have
\begin{equation*}
    \widetilde\Alt_{i,j}(\bH,\cA^{\text{exclu}}_{1,2}) = \{i\to j\}_\bH,
\end{equation*}
and, thus,
\begin{equation*}
    \bI_{i,j}(\bH,\cA^{\text{exclu}}_{1,2}) = \bI_{i,j}(\bH).
\end{equation*}

(II) $H_1\neq\emptyset$ and $H_2=\emptyset$. 
In this case, a key observation is that, because $H_1$ and $H_2$ cannot be non-empty simultaneously, moving any stream to $H_2$ requires moving all streams in $H_1$ out. 
We denote the latter by $\{\forall\,1\to [M]\backslash\{1\}\}_\bH$, and separately consider the following three cases regarding $i\in[M]\backslash\{2\}$ and $j\in[M]\backslash\{i\}$.

\noindent (II.i) $i\in[M]\backslash \{2\}$ and $j\in[M]\backslash\{i,2\}$. Moving a stream in $H_i$ to $H_j$ for such $i,j$ does not violate the prior information, so
\begin{equation*}
\begin{aligned}
    \widetilde\Alt_{i,j}(\bH,\cA^{\text{exclu}}_{1,2}) & = \{i\to j\}_\bH, \\
    \bI_{i,j}(\bH,\cA^{\text{exclu}}_{1,2}) & = \bI_{i,j}(\bH).
\end{aligned}
\end{equation*}

\noindent (II.ii) $i=1$ and $j=2$. After moving one stream from $H_1$ to $H_2$, all streams in $H_1$ have to be moved out, so
\begin{equation*}
\begin{aligned}
    \widetilde\Alt_{1,2}(\bH,\cA^{\text{exclu}}_{1,2}) & = \{1\to 2\}_\bH \cap \{\forall\,1\to [M]\backslash\{1\}\}_\bH, \\
    \bI_{1,2}(\bH,\cA^{\text{exclu}}_{1,2}) & = \min_{k\in H_1} \Big\{ I^k_{1,2} + \sum_{k'\in H_1\backslash\{k\}} \min_{l\in[M]\backslash\{1\}} I^{k'}_{1,l} \Big\}.
\end{aligned}
\end{equation*}

\noindent (II.iii) $i\in[M]\backslash\{1,2\}$ and $j=2$. Similar to (II.ii), after moving one stream from $H_i$ to $H_2$, all streams in $H_1$ have to be moved out, so 
\begin{equation*}
\begin{aligned}
    \widetilde\Alt_{i,2}(\bH,\cA^{\text{exclu}}_{1,2}) & = \{i\to 2\}_\bH \cap \{\forall\,1\to [M]\backslash\{1\}\}_\bH, \\
    \bI_{i,2}(\bH,\cA^{\text{exclu}}_{1,2}) & = \bI_{i,2}(\bH) + \sum_{k\in H_1} \min_{l\in[M]\backslash\{1\}} I^k_{1,l}.
\end{aligned}
\end{equation*}

(III) $H_1=\emptyset$ and $H_2\neq\emptyset$. 
This is analogous to (II) and omitted.

Moreover, the generic stopping time in \eqref{the proposed procedure} in this case is equivalent to 
\begin{equation*}
    \hat T = \hat T_\emptyset \wedge \hat T_1 \wedge \hat T_2,
\end{equation*}
where the subscript represents $\{i\in\{1,2\}:\hat H_i(\cdot)\neq\emptyset\}$, and 
\begin{equation*}
\begin{aligned}
    \hat T_\emptyset & = \inf_{n\geq 1} \midbig\{ \hat H_1(n)=\emptyset, \, \hat H_2(n)=\emptyset, \text{ and } 
    \lambda_{i,j}(n) \geq a_{j,i} \text{ for all } i\in[M]\backslash\{1,2\} \text{ and } j\in[M]\backslash\{i\} \midbig\},
\end{aligned}
\end{equation*}
\begin{equation*}
\begin{aligned}
    \hat T_1 & = \inf_{n\geq 1} \Big\{ \hat H_1(n)\neq\emptyset, \, \hat H_2(n) = \emptyset, \text{ and the following three events occur:} \\
    & (i) \, \lambda_{i,j}(n) \geq a_{j,i} \text{ for all } i\in[M]\backslash\{2\} \text{ and } j\in[M]\backslash\{i,2\}, \\
    & (ii) \, \min_{k\in\hat H_1(n)} \Big\{ \midbig( \ell^k_1(n) - \ell^k_2(n) \midbig) +
    \sum_{k'\in\hat H_1(n)\backslash\{k\}} \min_{l\in[M]\backslash\{1\}} \midbig\{ \ell^k_1(n) - \ell^k_l(n) \midbig\} \Big\} \geq a_{2,1}, \\
    & (iii) \, \forall\, i\in[M]\backslash\{1,2\}, \,
    \lambda_{i,2}(n) + \sum_{k\in\hat H_1(n)} \min_{l\in[M]\backslash\{1\}} \midbig\{ \ell^k_1(n) - \ell^k_l(n) \midbig\} \geq a_{2,i}
    \Big\},
\end{aligned}
\end{equation*}
and $\hat T_2$ is analogous to $\hat T_1$ and thus omitted. 

Based on the analysis of the examples above, we can see that, by identifying the minimal subsets of alternative hypothesis configurations, i.e., $\widetilde\Alt_{i,j}(\bH,\cA)$, we are able to quickly understand the essence of the problem and design an efficient procedure.


\section{Numerical studies} \label{sec: numerical studies}
In this section, we present numerical studies. 
Since the proposed procedure in \eqref{the proposed procedure} is the first to address the problem of sequential multiple testing with multiple hypotheses, 
and it reduces to the state-of-the-art procedures in \cite{Song_prior} directly and to those in \cite{Kobi_2015_AO, Kobi_2020_composite, Aris_IEEE, PaperIII} with corresponding modifications, in the case of two hypotheses, there is no suitable benchmark for comparison.
Thus, the main purpose of the numerical studies is to illustrate the implementation details and properties of the proposed procedure.
In Subsection \ref{subsec: study 1}, we introduce the setup and compare the expected sample size of the proposed procedure under different sets of prior information, which illustrates the benefits of prior information and the asymptotic optimality theory. 
In Subsection \ref{subsec: study 2}, we discuss how to use importance sampling to estimate the actual error probabilities of the proposed procedure, and demonstrate its reliability and its loss of reliability if part of its stopping criteria are neglected. 

\subsection{Setup and study 1} \label{subsec: study 1}
We consider $K=3$ streams of i.i.d. Gaussian data with means $\btheta=(\theta^1,\theta^2,\theta^3)$ and unit variances.
There are $M=3$ hypotheses for each stream:
\begin{equation*}
\begin{aligned}
    \Theta^1 & =\{\theta^1_1,\theta^1_2,\theta^1_3\}=\{0,1,-0.5\}, \\
    \Theta^2 & =\{\theta^2_1,\theta^2_2,\theta^2_3\}=\{-0.5,0,1\}, \\
    \Theta^3 & =\{\theta^3_1,\theta^3_2,\theta^3_3\}=\{1,-0.5,0\}.
\end{aligned}
\end{equation*}
We set the true global parameter as $\btheta=(0,0,0)$.
Based on our notations, $\bH$ is equal to
$(H^1,H^2,H^3)=(1,2,3)$ and $(H_1,H_2,H_3)=(\{1\},\{2\},\{3\})$.
We consider the following four cases of prior information, all of which admit this global parameter: 
(i) no prior information, 
(ii) weak lower bounds $\cA^{\text{lower}}_{1,0,0}$,
(iii) strong lower bounds $\cA^{\text{lower}}_{1,1,0}$, and (iv) known numbers $\cA^{\text{exact}}_{1,1,1}$.
Note that the four cases of prior information are from weaker to stronger.
Besides, we choose equal thresholds $a_{i,j}=a$ for all $i\neq j\in[3]$.
We focus on this relatively simple setup in order to clearly convey the key ideas and avoid unnecessary distractions.
We arrange the hypotheses and the true global parameter in this way so that the sum of Kullback-Leibler divergences of a cycle of length $3$ is smaller than that of all pairs, which will be computed explicitly next.

\begin{table}[]
    \centering
    \begin{tabular}{|c|c|c|c|c|c|c|c|}
        \hline
         $\cA$ & $\bI_{1,2}(\bH,\cA)$ & $\bI_{1,3}(\bH,\cA)$ & $\bI_{2,1}(\bH,\cA)$ & $\bI_{2,3}(\bH,\cA)$ & $\bI_{3,1}(\bH,\cA)$ & $\bI_{3,2}(\bH,\cA)$ & MIN  \\
        \hline
        $[M]^K$ & $1/2$ & $1/8$ & $1/8$ & $1/2$ & $1/2$ & $1/8$ & $1/8$ \\
        \hline
        $\cA^{\text{lower}}_{1,0,0}$ & $5/8$ & $1/4$ & $1/8$ & $1/2$ & $1/2$ & $1/8$ & $1/8$ \\
        \hline
        $\cA^{\text{lower}}_{1,1,0}$ & $5/8$ & $3/8$ & $1/4$ & $5/8$ & $1/2$ & $1/8$ & $1/8$ \\
        \hline
        $\cA^{\text{exact}}_{1,1,1}$ & $5/8$ & $3/8$ & $3/8$ & $5/8$ & $5/8$ & $3/8$ & $3/8$ \\
        \hline
    \end{tabular}
    \caption{Minimum Kullback-Leibler divergences between $\bH$ and all alternative sets in the four cases of prior information.}
    \label{table of Is}
\end{table}

Note that the Kullback-Leibler divergence between two Gaussian distributions with means $\theta$ and $v$ and unit variances is $(\theta-v)^2/2$, so we have
$I^1_{1,2} = 1/2$,
$I^1_{1,3} = 1/8$, 
$I^2_{2,1} = 1/8$, 
$I^2_{2,3} = 1/2$, 
$I^3_{3,1} = 1/2$, 
$I^3_{3,2} = 1/8$.
The minimum Kullback-Leibler divergence between $\bH$ and $\bI_{i,j}(\bH,\cA)$ for the four cases of $\cA$ and the six pairs of $i,j$ are summarized in Table \ref{table of Is}.
The last column represents $\min_{i\neq j\in[M]} \bI_{i,j}(\bH,\cA)$. 
In particular, when $\cA=\cA^{\text{exact}}_{1,1,1}$, the three pairwise sums and the two cyclic sums of Kullback-Leibler divergences are 
\begin{equation*}
\begin{gathered}
    \bI_{1,2}(\bH)+\bI_{2,1}(\bH) = 5/8, \quad
    \bI_{1,3}(\bH)+\bI_{3,1}(\bH) = 5/8, \quad 
    \bI_{2,3}(\bH)+\bI_{3,2}(\bH) = 5/8, \\
    \bI_{1,2}(\bH)+\bI_{2,3}(\bH)+\bI_{3,1}(\bH) = 3/2, \quad
    \bI_{1,3}(\bH)+\bI_{3,2}(\bH)+\bI_{2,1}(\bH) = 3/8,
\end{gathered}
\end{equation*}
rendering $\bI_{i,j}(\bH,\cA^{\text{exact}}_{1,1,1})=\bI_{i,j}^{\text{len}=3}(\bH,\cA^{\text{exact}}_{1,1,1})$ for $(i,j)=(1,3), (3,2)$ and $(2,1)$.
Based on the asymptotic approximation in Corollary \ref{corollary, AO}, we know that, as $a\to\infty$, the asymptotic approximations to the expected sample size of the proposed procedure in the four cases of prior information are $8a$, $8a$, $8a$, $8a/3$, respectively. 
Note that prior information of known numbers reduces the asymptotic approximation remarkably, as it allows summing evidence across streams, whereas prior information of known lower bounds may not.
However, when the thresholds, i.e., error tolerances, are not all equal, the benefit of known lower bounds can be as strong as that of known numbers,
e.g., when $\cA=\cA^{\text{lower}}_{1,1,0},$ $a_{2,1}=a_{3,1}=a_1$, $a_{1,2}=a_{3,2}=a_2$ and $a_{1,3}=a_{2,3}=a_3$, we have $\bExp_\btheta[\hat T] \sim \max\{ 8a_1/3, 4a_2, 8a_3 \}$, which can be as small as $8a/3$ if $a_1=a$, $a_2=2a/3$ and $a_3=a/3$.

\begin{figure*} 
    \centering
    \includegraphics[width=0.49\textwidth]{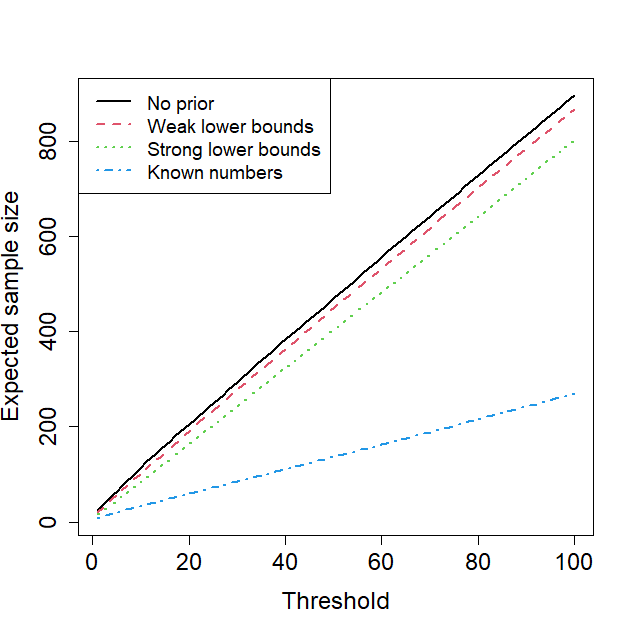} 
    \includegraphics[width=0.49\textwidth]{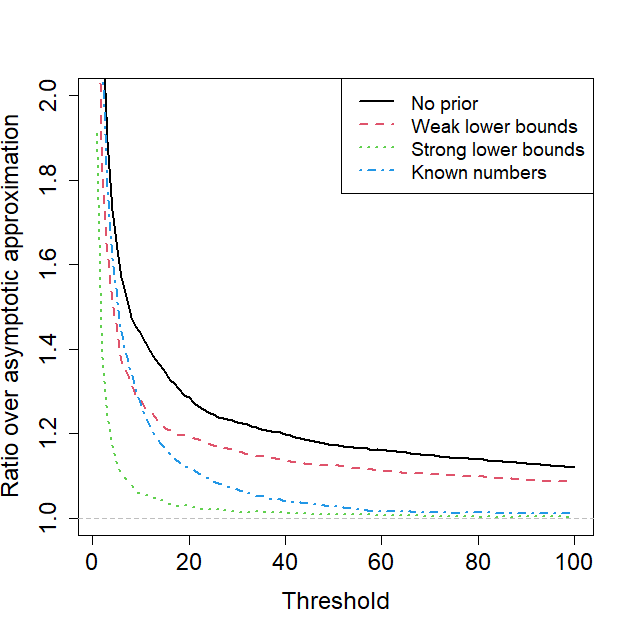}
    \caption{
    Expected sample sizes (left) and ratios of expected
sample sizes to their asymptotic approximations (right) of the
proposed procedure against equal thresholds.
The four curves, viewed from top to bottom on the far right of the first sub-figure, correspond to no prior, weak lower bounds, strong lower bounds, and known numbers, respectively.
Simulations are based on $10^4$ runs. 
    }
    \label{Fig1}
\end{figure*}

In Figure \ref{Fig1}, we plot the expected sample size of the proposed procedure and its ratio against the asymptotic approximation, in each of the four cases of prior information, against threshold. 
We can see that the expected sample sizes are basically linear in the thresholds, and their ratios with the asymptotic approximations converge to one, consistent with the asymptotic theory.
Moreover,
the order of the expected sample sizes in the four cases of prior information is 
$\cA^{\text{exact}}_{1,1,1} \leq \cA^{\text{lower}}_{1,1,0} \leq \cA^{\text{lower}}_{1,0,0} \leq [M]^K$,
i.e.,
stronger prior information leads to a smaller expected sample size, which is consistent with intuition as stronger prior information reduces the number of alternative hypothesis configurations.

\subsection{Study 2} \label{subsec: study 2}
In this subsection, we focus on the prior information of known numbers, i.e., $\cA^{\text{exact}}_{1,1,1}$. 
We first discuss how to apply importance sampling to estimate the actual error probabilities, based on which we are able to demonstrate the error control of the proposed procedure and its failure of error control if part of its stopping criteria are neglected. 

The basic idea of importance sampling, according to, e.g., \cite{Siegmumd_IS, Bucklew_Book, Song_prior, Song_2025Efficient}, is that, in order to evaluate a small $\Pro(\Gamma)$, we find another probability measure $\Qro$ such that
(i) $\Pro$ and $\Qro$ are mutually absolutely continuous, so that the likelihood ratio $d\Qro/d\Pro$ is well-defined,
(ii) $\Qro(\Gamma)$ is not small, so that event $\Gamma$ can be observed frequently under $\Qro$, and
(iii) $\Qro$ is similar to $\Pro$, so that the variance of the importance weight is not too large.
Then, based on the following Wald's likelihood ratio identity:
\begin{equation*}
    \Pro(\Gamma) = \Exp_\Pro[1\{\Gamma\}] = \Exp_\Qro\left[\left(\frac{d\Qro}{d\Pro}\right)^{-1}1\{\Gamma\}\right],
\end{equation*}
one can estimate $\Pro(\Gamma)$ by simulating $(d\Qro/d\Pro)^{-1} 1\{\Gamma\}$ under $\Qro$ for many times and take the average. 
Specifically, in this work, given prior information $\cA$, configuration $\bH$, and hypotheses $i\neq j\in[M]$, in order to estimate the probability of making type-$(i,j)$ errors, i.e., $\bPro_\bH(H_i\cap D_j\neq\emptyset)$, we recommend using importance distribution
\begin{equation*}
    \bPro_\bH^*:=\frac{1}{|\widetilde\Alt_{i,j}(\bH,\cA)|} \sum_{\bA\in\widetilde\Alt_{i,j}(\bH,\cA)} \bPro_\bA,
\end{equation*}
where we suppress its dependence on $\cA$ and $i,j$ to lighten the notations.
It is clear that 
(i) $\bPro_\bH$ and $\bPro_\bH^*$ are mutually absolutely continuous, 
(ii) $\bPro_\bH^*(H_i\cap D_j\neq\emptyset)\geq 1-\alpha_{\operatorname{sum}}$ is large, since for any $\bA\in\widetilde\Alt_{i,j}(\bH,\cA)$ we have $\bPro_\bA(H_i\cap D_j\neq\emptyset)\geq\bPro_\bA(\bD=\bA)\geq 1-\alpha_{\operatorname{sum}}$,
and 
(iii) $\bPro_\bH^*$ is similar to $\bPro_\bH$ among distributions that satisfy (ii).
Besides, the likelihood ratio between $\bPro_\bH^*$ and $\bPro_\bH$ based on the data up to time $n$ is 
\begin{equation*}
\begin{aligned}
    \frac{d\bPro_\bH^*}{d\bPro_\bH}(\bcF(n)) & = \frac{1}{|\widetilde\Alt_{i,j}(\bH,\cA)|}\sum_{\bA\in\widetilde\Alt_{i,j}(\bH,\cA)} \exp\{\bell_\bA(n)-\bell_\bH(n)\} \\
    & = \frac{1}{|\widetilde\Alt_{i,j}(\bH,\cA)|}\sum_{\bA\in\widetilde\Alt_{i,j}(\bH,\cA)} \exp\left\{\sum_{k\in\bH\triangle\bA} \left(\ell^k_{A^k}(n)-\ell^k_{H^k}(n)\right) \right\}.
\end{aligned}
\end{equation*}

In this numerical study, we focus on the probability of making type-$(1,3)$ errors, i.e., 
\begin{equation*}
    \bPro_\bH(H_1\cap D_3\neq\emptyset)=\bPro_\bH\midbig((D^1,D^2,D^3)=(3,2,1) \text{ or } (3,1,2)\midbig), 
\end{equation*}
where $(3,2,1)$ makes a pair of errors and $(3,1,2)$ makes a cycle of errors. 
We estimate this probability not only when the proposed procedure in \eqref{hat T, known numbers and M=3} is used correctly, but also when part of its stopping criteria are neglected.
Specifically, we consider two scenarios: when its first three criteria against pairwise errors are neglected, 
and when its last two criteria against cyclic errors are neglected.
In the left of Figure \ref{Fig2}, we plot the actual error probabilities of the three procedures along with the theoretical upper bound on the error probability if the procedure is used correctly. 
We can see that, when the procedure is used correctly, its actual error probability is well-controlled by the nominal level, and the extent of conservativeness is low when the number of streams is small;  
but when the procedure is used incorrectly, its actual error probability explodes.
In the right of Figure \ref{Fig2}, we plot the relative errors of the estimates.
We can see that the relative errors in estimating probabilities as small as $10^{-40}$ are below $2.5\%$ based on only $10^4$ runs, demonstrating the high efficiency of the proposed importance sampling approach. 

\begin{figure*} 
    \centering
    \includegraphics[width=0.49\textwidth]{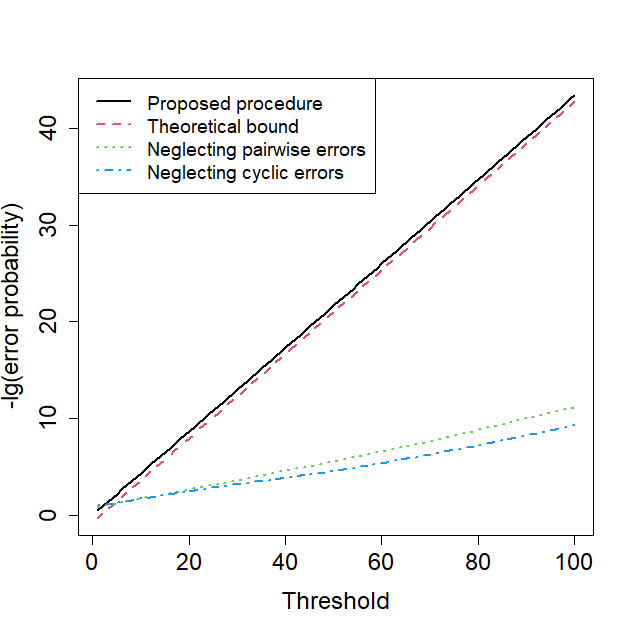} 
    \includegraphics[width=0.49\textwidth]{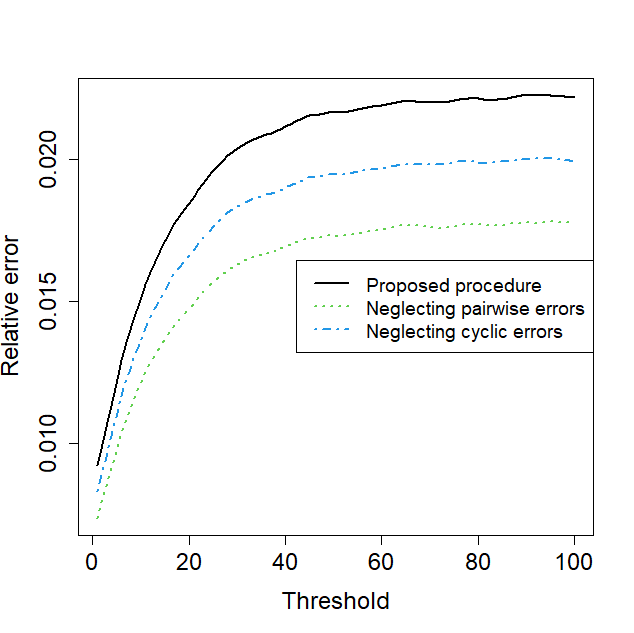}
    \caption{Negative base-$10$ logarithm of the error probabilities (left) and the relative errors of the estimation (right). 
    The four curves, viewed from top to bottom on the far right of the first sub-figure, correspond to the proposed procedure in \eqref{hat T, known numbers and M=3}, the theoretical upper bound on its error probability in Theorem \ref{theorem, error control}, the proposed procedure in \eqref{hat T, known numbers and M=3} with the first three criteria neglected, and the proposed procedure in \eqref{hat T, known numbers and M=3} with the last two criteria neglected.
    Simulations are based on $10^4$ runs. 
    }
    \label{Fig2}
\end{figure*}

\section{Extension to composite hypotheses} \label{sec: extension to composite hypotheses}
In this section, we consider the case where the hypotheses are not simple, but composite and parametrized.
We start with a brief problem formulation that highlights the difference from the case of simple hypotheses.

Similarly as before, suppose that $X^k$, $k\in[K]$ are $K$ independent streams of i.i.d. data. 
But differently, for every $k\in[K]$, suppose that the density of $X^k(1)$ is given by $f^k_{\theta^k}$ where $\theta^k\in\Theta^k$ is an unknown parameter, and consider the following hypothesis testing problem:
\begin{equation*}
    \theta^k\in\Theta^k_i \text{ for } i\in[M],
\end{equation*}
where $\{\Theta^k_i:i\in[M]\}$ are disjoint and form a partition of $\Theta^k$. 
We denote by $\btheta:=(\theta^1,\ldots,\theta^K)\in\bTheta:=\Theta^1\otimes\cdots\otimes\Theta^K$ the global parameter, and by $\bH(\btheta):=(H^1(\btheta),\ldots,H^K(\btheta)):=(H_1(\btheta),\ldots,H_M(\btheta))$ the corresponding hypothesis configuration. 
We denote by $\bPro_\btheta=\prod_{k\in[K]} \Pro^k_{\theta^k}$ the global distribution and $\bExp_\btheta$ and $\Exp^k_{\theta^k}$, $k\in[K]$ the corresponding expectations.
The data filtration and a testing procedure are defined the same as before.
For any prior information $\cA\subseteq[M]^K$ about the unknown configuration $\bH(\btheta)$, and error levels $\balpha$, the subfamily of reliable procedures is defined as
\begin{equation*}
\begin{aligned}
    \Delta(\balpha,\cA) := \{
    (T,\bD)\in\Delta: \, & \bPro_\btheta(T<\infty) = 1 \text{ and} \\
    & \bPro_\btheta(H_i(\btheta)\cap D_j\neq\emptyset) \leq \alpha_{i,j} \text{ for all } \btheta\in\bTheta_\cA \text{ and } i\neq j\in[M]
    \},
\end{aligned}
\end{equation*}
where $\bTheta_\cA:=\{\btheta\in\bTheta: \bH(\btheta)\in\cA\}$ denotes all global parameter values consistent with the prior information.
Our goal is to achieve 
\begin{equation*}
    \cL_\btheta(\balpha,\cA) := \inf\{ \bExp_\btheta[T]: (T,\bD)\in\Delta(\balpha,\cA) \},
\end{equation*}
simultaneously for all $\btheta\in\bTheta_\cA$ as $\alpha_\max\to 0$.

\subsection{Universal lower bound}
We assume that for every $k\in[K]$ and $\theta^k\neq v^k\in\Theta^k$, the Kullback-Leibler divergence between $\theta^k$ and $v^k$ are positive and finite, i.e., 
\begin{equation*}
    I^k_{\theta^k,v^k} := \int f^k_{\theta^k} \log\frac{f^k_{\theta^k}}{f^k_{v^k}} d\nu^k\in (0,\infty),
\end{equation*}
and that for every $j\in[M]$ that $\theta^k\notin\Theta^k_j$, the minimum Kullback-Leibler divergence between $\theta^k$ and $\Theta^k_j$ is positive, i.e., 
\begin{equation*}
    I^k_{\theta^k}(j) := \inf_{v^k\in\Theta^k_j} I^k_{\theta^k,v^k} > 0.
\end{equation*}
We define $\Alt_{i,j}(\bH(\btheta),\cA)$ and $\widetilde\Alt_{i,j}(\bH(\btheta),\cA)$ the same as before, and define
\begin{equation*}
\begin{aligned}
    \bI_{i,j}(\btheta,\cA) & := \min_{\bA\in\widetilde\Alt_{i,j}(\bH(\btheta),\cA)} \bI_\btheta(\bA), \\
    \text{where } \bI_\btheta(\bA) & := 
    \sum_{k\in\bH(\btheta)\triangle\bA} I^k_{\theta^k}(A^k).
\end{aligned}
\end{equation*}

We are now ready to state the following universal lower bound.
All proofs in this section are deferred to Appendix \ref{appendix: proofs related to composite hyps}.
\begin{theorem} \label{theorem, composite, universal lower bound}
    For any prior information $\cA$, global parameter $\btheta\in\bTheta_\cA$, and error probabilities $\balpha$, we have
    \begin{equation*}
        \cL_\btheta(\balpha,\cA) \geq \max_{i\neq j\in[M]} \frac{\varphi(\alpha_{\operatorname{sum}},\alpha_{j,i})}{\bI_{i,j}(\btheta,\cA)}.
    \end{equation*}
\end{theorem}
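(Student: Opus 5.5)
We follow the change-of-measure argument used for Theorem \ref{theorem, LB}, replacing the single alternative distribution $\bPro_\bA$ by a parameter-level alternative $\bPro_{\bm{v}}$ chosen to nearly attain the infimum defining $I^k_{\theta^k}(j)$. Fix $\btheta\in\bTheta_\cA$, $i\neq j\in[M]$, and a procedure $(T,\bD)\in\Delta(\balpha,\cA)$ with $\bExp_\btheta[T]<\infty$; otherwise there is nothing to prove. Fix $\bA\in\widetilde\Alt_{i,j}(\bH(\btheta),\cA)$ and $\epsilon>0$. We build an alternative global parameter $\bm{v}=(v^1,\ldots,v^K)$ coordinatewise. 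For $k\notin\bH(\btheta)\triangle\bA$ we set $v^k=\theta^k$. For $k\in\bH(\btheta)\triangle\bA$ we choose $v^k\in\Theta^k_{A^k}$ with $I^k_{\theta^k,v^k}\leq I^k_{\theta^k}(A^k)+\epsilon/K$, which is possible by the definition of the infimum. Then $\bH(\bm{v})=\bA\in\cA$, so $\bm{v}\in\bTheta_\cA$ and the error constraints also apply under $\bPro_{\bm{v}}$. Moreover the per-observation KL divergence between $\bPro_\btheta$ and $\bPro_{\bm v}$ is $\sum_{k\in\bH(\btheta)\triangle\bA} I^k_{\theta^k,v^k}\leq \bI_\btheta(\bA)+\epsilon$.

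We then apply the data-processing inequality to the event $\Gamma=\{H_j(\bm v)\cap D_i\neq\emptyset\}$, with the Bernoulli KL function written as $\varphi$. Since $\bA\in\Alt_{i,j}$, we have $\bH(\btheta)\in\Alt_{j,i}(\bA,\cA)$, and there is a stream $k\in H_i(\btheta)\cap A_j$. Under $\bPro_{\bm v}$, the event $\Gamma$ is a type-$(j,i)$ error, so $\bPro_{\bm v}(\Gamma)\leq\alpha_{j,i}$. Under $\bPro_\btheta$, the event $\Gamma$ contains $\{\bD=\bH(\btheta)\}$, since deciding $D^k=i$ for that stream $k$ suffices. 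Hence $\bPro_\btheta(\Gamma)\geq 1-\alpha_{\operatorname{sum}}$ by the union bound over all error types.

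Wald's identity for the log-likelihood ratio at the stopping time $T$ gives $\bExp_\btheta[\log(d\bPro_\btheta/d\bPro_{\bm v})(\bcF(T))]=\bExp_\btheta[T]\cdot\mathrm{KL}$, which is valid because $\bExp_\btheta[T]<\infty$ and the increments have finite mean. The data-processing inequality then yields $\bExp_\btheta[T](\bI_\btheta(\bA)+\epsilon)\geq \varphi(\alpha_{\operatorname{sum}},\alpha_{j,i})$, using the monotonicity of the binary KL in this regime exactly as in the proof of Theorem \ref{theorem, LB}.

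Finally we let $\epsilon\to0$, minimize over $\bA$ to obtain $\bI_{i,j}(\btheta,\cA)$ in the denominator, and take the maximum over $(i,j)$ and the infimum over procedures. The main point needing care is that the infimum $I^k_{\theta^k}(j)$ need not be attained. The $\epsilon$-approximation handles this, provided every alternative parameter value has finite KL divergence, which is assumed. A second, minor point is that $\bI_{i,j}(\btheta,\cA)$ is defined as a minimum over $\widetilde\Alt$ only. This is harmless: any $\bA\in\widetilde\Alt_{i,j}$ is a legitimate alternative, so restricting to $\widetilde\Alt$ can only make the bound weaker if anything, and the argument works for any $\bA\in\Alt_{i,j}$.
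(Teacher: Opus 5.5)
Your proof is correct, and it differs from the paper's only in packaging: you argue directly, while the paper reduces to the simple-hypothesis case.

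The paper's argument runs as follows. For any choice of surrogate points $v^k_j\in\Theta^k_j$, $j\neq H^k(\btheta)$, a procedure in $\Delta(\balpha,\cA)$ is also reliable for the simple problem with hypotheses $f^k_{\theta^k}$ and $f^k_{v^k_j}$. So Theorem~\ref{theorem, LB} applies as a black box. The paper then takes the supremum over all surrogate choices and says this ``simplifies to the desired form''. That step needs two facts: for each fixed $\bA$, the infimum of $\sum_{k\in\bH(\btheta)\triangle\bA} I^k_{\theta^k,v^k_{A^k}}$ separates across streams into $\bI_\btheta(\bA)$, and this infimum commutes with the minimum over $\bA$.

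You instead inline the proof of Theorem~\ref{theorem, LB}, namely Wald's identity plus the data-processing inequality. You use a single $\epsilon$-optimal alternative parameter $\boldsymbol{v}$ for each $\bA$. You also use the event $\{A_j\cap D_i\neq\emptyset\}$ in place of $\{\bD=\bH\}$, which gives the same bound $\varphi(\alpha_{\operatorname{sum}},\alpha_{j,i})$.

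The paper's route is shorter and reuses the earlier theorem. However, it leaves implicit that the surrogate simple problem satisfies the standing assumption, and it does not spell out the sup--min interchange. Your route is self-contained and invokes reliability only under the two parameters $\btheta$ and $\boldsymbol{v}$. It also handles the possible non-attainment of $I^k_{\theta^k}(j)$ explicitly through the $\epsilon$-approximation.
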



\subsection{Methodology and analysis}
For any stream $k\in[K]$, hypothesis $i\in[M]$, and time $n\geq 1$, we denote by $\ell^k_i(n)$ the maximum log-likelihood in stream $k$ for hypothesis $i$ based on the first $n$ data, i.e., 
\begin{equation*}
    \ell^k_i(n) := \sup_{\theta^k\in\Theta^k_i} \sum_{t=1}^n \log f^k_{\theta^k}(X^k(t)), 
\end{equation*}
and for any configuration $\bH\in[M]^K$, denote 
\begin{equation*}
    \bell_\bH(n):=\sum_{k\in[K]} \ell^k_{H^k}(n).
\end{equation*}
The maximum likelihood hypothesis $\hat H^k(n)$ for every $k\in[K]$ and the maximum likelihood configuration $\hat\bH(n)$ are defined the same as before.

In order to ensure that this procedure terminates almost surely and to establish an asymptotic upper bound on its expected sample size, we assume that for any $k\in[K]$, $\theta^k\in\Theta^k$, $j\in[M]$ that $\theta^k\notin\Theta^k_j$, and $\epsilon\in(0,1)$,
\begin{equation} \label{composite, complete assumption}
    \sum_{n=1}^\infty \Pro^k_{\theta^k} \Big( \frac{1}{n}\midbig( \ell^k_{\hat H^k(n)} - \ell^k_j(n) \midbig) \leq (1-\epsilon) I^k_{\theta^k}(j) \Big) < \infty.
\end{equation}
In order to ensure the error control, we assume that there exists a threshold function, $\beta(n,\alpha)$ for $n\geq 1$ and $\alpha\in(0,1)$, such that 
\begin{equation} \label{composite, rate of beta}
    \beta(n,\alpha) \leq |\log\alpha|(1+o(1)) + o(n) \text{ as } \alpha\to 0 \text{ and } n\to\infty,
\end{equation}
and, for any $\btheta\in\bTheta$, $\bA\neq\bH(\btheta)$, and $\alpha\in(0,1)$,
\begin{equation} \label{composite, ville-type assumption}
    \bPro_\btheta\Big( \exists\,n\geq 1, \bell_{\bA}(n) - \bell_{\bH(\btheta)}(n) \geq \beta(n,\alpha) \Big) \leq \alpha.
\end{equation}
Such threshold functions can be designed, e.g., following \cite{mixture2021} when $\Theta^k$, $k\in[K]$ are subsets of the parameter space of a one-dimensional exponential family, and following \cite{Ali2025} for more general distribution families.

Then, we are ready to present the proposed procedure:
For any prior information $\cA$ and error levels $\balpha$, 
\begin{equation} \label{composite, the proposed procedure}
\begin{aligned}
    \hat T & := \inf_{n\geq 1} \Big\{
    \hat\bH(n) \in \cA \text{ and } \bell_{\hat\bH(n)}(n) - \max_{\bA\in\widetilde\Alt_{i,j}(\hat\bH(n),\cA)} \bell_\bA(n) \geq \beta\midbig(n,\alpha_{j,i}/b_{j,i}(\cA)\midbig) \text{ for all } i\neq j\in[M]
    \Big\}, \nonumber \\
    \hat\bD & := \hat\bH(\hat T). 
\end{aligned}
\end{equation}
The following two theorems establish the error control and asymptotic upper bound on the expected sample size of this procedure. Combining them and Theorem \ref{theorem, composite, universal lower bound} completes the asymptotic optimality theory.

\begin{theorem} \label{theorem, composite, error control}
    For any prior information $\cA$ and error levels $\balpha$, we have $(\hat T,\hat\bD)\in\Delta(\balpha,\cA)$.
\end{theorem}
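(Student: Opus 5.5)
The plan is to prove the two parts of Theorem \ref{theorem, composite, error control} separately: almost sure termination, then the bound on every type-$(i,j)$ error probability. Throughout, fix $\btheta\in\bTheta_\cA$, write $\bH=\bH(\btheta)$, and fix $i\neq j\in[M]$.

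\emph{Termination.} First I show that $\hat\bH(n)=\bH$ for all large $n$, almost surely.
\begin{itemize}
\item Fix a stream $k$ and any $l\neq H^k$. Assumption \eqref{composite, complete assumption} and Borel--Cantelli give, almost surely for all large $n$,
\begin{equation*}
\ell^k_{\hat H^k(n)}(n)-\ell^k_l(n)>(1-\epsilon)\,n\,I^k_{\theta^k}(l)>0 .
\end{equation*}
\item This forces $\hat H^k(n)\neq l$, since otherwise the left side would be $0$. As there are finitely many $k$ and $l$, $\hat H^k(n)=H^k$ for every $k$ eventually, so $\hat\bH(n)=\bH\in\cA$.
\end{itemize}
Next I show that every stopping criterion is eventually met.
\begin{itemize}
\item For every $\bA\in\widetilde\Alt_{i,j}(\bH,\cA)$, and for large $n$,
\begin{equation*}
\bell_\bH(n)-\bell_\bA(n)=\sum_{k\in\bH\triangle\bA}\midbig(\ell^k_{H^k}(n)-\ell^k_{A^k}(n)\midbig)\geq (1-\epsilon)\,n\,\bI_\btheta(\bA).
\end{equation*}
This grows linearly in $n$.
\item By \eqref{composite, rate of beta}, for fixed $\balpha$ the threshold $\beta(n,\alpha_{j,i}/b_{j,i}(\cA))$ is $O(1)+o(n)$.
\item There are finitely many $\bA$ and finitely many pairs $(i,j)$. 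Hence every criterion in \eqref{composite, the proposed procedure} holds for large $n$, and $\bPro_\btheta(\hat T<\infty)=1$.
\end{itemize}

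\emph{Error control.} The key step is the analogue of the Proposition in the simple-hypothesis section, now with maximized log-likelihoods $\ell^k_i(n)$. Its proof used only that $\ell^k_{\hat H^k(n)}(n)=\max_l \ell^k_l(n)$, which still holds by definition of the MLH. So the maximum over $\widetilde\Alt$ in \eqref{composite, the proposed procedure} equals the maximum over the full $\Alt$.

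Now suppose the procedure makes a type-$(i,j)$ error, i.e. $H_i\cap\hat D_j\neq\emptyset$ with $\hat\bD=\hat\bH(\hat T)$.
\begin{itemize}
\item Then $\bH\in\Alt_{j,i}(\hat\bD,\cA)$, because $\bH\in\cA$ and $\hat D_j\cap H_i\neq\emptyset$.
\item The criterion for the pair $(j,i)$ at time $\hat T$, combined with the Proposition, gives
\begin{equation*}
\bell_{\hat\bD}(\hat T)-\bell_\bH(\hat T)\geq \beta\midbig(\hat T,\alpha_{i,j}/b_{i,j}(\cA)\midbig).
\end{equation*}
\item Also $\hat\bD\in\Alt_{i,j}(\bH,\cA)$. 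Therefore
\begin{equation*}
\{H_i\cap\hat D_j\neq\emptyset\}\subseteq\bigcup_{\bA\in\Alt_{i,j}(\bH,\cA)}\midbig\{\exists\, n:\bell_\bA(n)-\bell_\bH(n)\geq\beta(n,\alpha_{i,j}/b_{i,j}(\cA))\midbig\}.
\end{equation*}
\end{itemize}
I then apply the union bound and the Ville-type assumption \eqref{composite, ville-type assumption} to each event in the union. Each has probability at most $\alpha_{i,j}/b_{i,j}(\cA)$. Since $|\Alt_{i,j}(\bH,\cA)|\leq b_{i,j}(\cA)$, the total is at most $\alpha_{i,j}$.

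\emph{Main obstacle.} The delicate point is index bookkeeping: the threshold attached to the $(j,i)$ criterion must be exactly the one controlling type-$(i,j)$ errors, and its constant $b_{i,j}(\cA)$ must match the size of $\Alt_{i,j}(\bH,\cA)$. A second point to check is that the Proposition's argument really uses only the MLH maximality. The rest is routine.
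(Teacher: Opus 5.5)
Your proposal is correct and follows the paper's route. Termination comes from the Borel--Cantelli consequence of \eqref{composite, complete assumption}: eventually $\hat\bH(n)=\bH(\btheta)$ and the evidence grows linearly, while $\beta$ grows sublinearly by \eqref{composite, rate of beta}; error control repeats the proof of Theorem \ref{theorem, error control} with \eqref{composite, ville-type assumption} in place of Ville's inequality in the last step. You actually give more detail than the paper, notably the check that the reduction from $\Alt$ to $\widetilde\Alt$ carries over because it uses only the maximality of $\ell^k_{\hat H^k(n)}(n)$, and the matching of the threshold constant $b_{i,j}(\cA)$ with the union bound over $\Alt_{i,j}(\bH,\cA)$.
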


\begin{theorem} \label{theorem, composite, AUB}
    For any prior information $\cA$ and global parameter $\btheta\in\bTheta_\cA$, we have
    \begin{equation*}
        \bExp_\btheta[\hat T] \lesssim \max_{i\neq j\in[M]} \frac{|\log\alpha_{j,i}|}{\bI_{i,j}(\btheta,\cA)}
    \end{equation*}
    as $\alpha_\max\to 0$.
\end{theorem}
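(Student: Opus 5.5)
The plan is to follow the usual route for asymptotic upper bounds of likelihood-based stopping times. We introduce a deterministic time after which all stopping criteria hold simultaneously, and show that exceeding it has summable probability. Fix $\btheta\in\bTheta_\cA$ with $\bH:=\bH(\btheta)$, and fix $\epsilon\in(0,1)$. Set
\begin{equation*}
    N_\epsilon := \max_{i\neq j\in[M]} \frac{|\log\alpha_{j,i}|}{(1-\epsilon)^2\,\bI_{i,j}(\btheta,\cA)}.
\end{equation*}
We write $\bExp_\btheta[\hat T]\leq N_\epsilon+1+\sum_{n> N_\epsilon}\bPro_\btheta(\hat T>n)$. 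The target is to show that the tail sum is $o(N_\epsilon)$, and in fact bounded uniformly in $\balpha$. Letting $\alpha_\max\to0$ and then $\epsilon\to0$ then gives the claim.

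\textbf{Step 1: control the MLC.} Let $G_n$ be the event that, for every $k\in[K]$ and every $j\neq H^k$,
\begin{equation*}
    \ell^k_{H^k}(n)-\ell^k_j(n)\geq (1-\epsilon) n I^k_{\theta^k}(j).
\end{equation*}
On $G_n$ every such gap is positive, so $\hat H^k(n)=H^k$ for all $k$ and hence $\hat\bH(n)=\bH\in\cA$. Here I would use \eqref{composite, complete assumption}. On the event in that assumption, $\hat H^k(n)$ and $H^k$ should coincide after some point. So I first show $\sum_n\Pro^k_{\theta^k}(\hat H^k(n)\neq H^k)<\infty$: if $\hat H^k(n)=j\neq H^k$, then $\ell^k_{\hat H^k(n)}(n)-\ell^k_j(n)=0\leq(1-\epsilon)nI^k_{\theta^k}(j)$, which is summable by the assumption. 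Combining this with a second application of the assumption gives $\sum_n\bPro_\btheta(G_n^c)<\infty$ by a union bound over the finitely many pairs $(k,j)$.

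\textbf{Step 2: criteria hold on $G_n$ for large $n$.} On $G_n$ we have $\hat\bH(n)=\bH$. So for each $i\neq j$ and each $\bA\in\widetilde\Alt_{i,j}(\bH,\cA)$,
\begin{equation*}
    \bell_\bH(n)-\bell_\bA(n)=\sum_{k\in\bH\triangle\bA}\big(\ell^k_{H^k}(n)-\ell^k_{A^k}(n)\big)\geq(1-\epsilon)n\,\bI_\btheta(\bA)\geq(1-\epsilon)n\,\bI_{i,j}(\btheta,\cA).
\end{equation*}
By \eqref{composite, rate of beta}, and since $b_{j,i}(\cA)$ is a constant,
\begin{equation*}
    \beta(n,\alpha_{j,i}/b_{j,i}(\cA))\leq|\log\alpha_{j,i}|(1+\delta)+\delta n+C_\delta
\end{equation*}
for any $\delta>0$, with $C_\delta$ independent of $n$ and $\alpha$ once $\alpha_\max$ is small. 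Choose $\delta$ small relative to $\epsilon\min_{i\neq j}\bI_{i,j}(\btheta,\cA)$. Then for every $n\geq N_\epsilon(1+O(\delta))+O(C_\delta)$ the criterion for $(i,j)$ holds on $G_n$. This $n$ is $N_\epsilon(1+o(1))$ as $\alpha_\max\to0$, after adjusting $\epsilon$.

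\textbf{Step 3: conclusion.} For such $n$ we get $\{\hat T>n\}\subseteq G_n^c$. Hence $\bExp_\btheta[\hat T]\leq N_\epsilon(1+O(\delta))+O(1)+\sum_n\bPro_\btheta(G_n^c)$, and the last sum is a finite constant independent of $\balpha$. Dividing by $\max_{i\neq j}|\log\alpha_{j,i}|/\bI_{i,j}(\btheta,\cA)$ and sending $\alpha_\max\to0$, then $\delta,\epsilon\to0$, yields the theorem.

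The main obstacle is the uniformity in \eqref{composite, rate of beta}. The $o(1)$ and $o(n)$ terms must be handled jointly as $\alpha\to0$ and $n\to\infty$. I would interpret it as: for every $\delta$ there exist $\alpha_0,n_0$ such that the bound holds for $\alpha<\alpha_0$ and $n>n_0$, with small $n$ absorbed into $C_\delta$. A secondary point is Step 1's use of the assumption with $\hat H^k(n)$ in place of $H^k$, which needs the short argument given there.
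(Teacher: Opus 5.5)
Your proposal is correct and follows essentially the paper's argument: the event that every gap $\ell^k_{H^k}(n)-\ell^k_j(n)$ grows at rate $(1-\epsilon)I^k_{\theta^k}(j)$ has summable complement by \eqref{composite, complete assumption}, and on that event the MLC equals $\bH(\btheta)$ and every criterion against $\widetilde\Alt_{i,j}(\bH(\btheta),\cA)$ holds once $n$ exceeds $\max_{i\neq j}|\log\alpha_{j,i}|/\bI_{i,j}(\btheta,\cA)$ up to a factor tending to one. Your explicit reduction from $\ell^k_{\hat H^k(n)}(n)$ to $\ell^k_{H^k}(n)$ in that assumption fills in a step the paper leaves implicit, and absorbing small $n$ into $C_\delta$ is unnecessary because only $n\geq N_\epsilon\to\infty$ matter.
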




\section{Conclusion and discussions} \label{sec: conclusion}
In this work, we formulate the problem of sequential multiple testing with multiple hypotheses in each testing problem and prior information on the unknown hypothesis configuration. 
We design a testing procedure that is computationally much more efficient than the na\"ive procedure by concentrating on alternative hypothesis configurations that are the most adjacent to the maximum likelihood one. 
We demonstrate that this procedure is both reliable in controlling the error probabilities and asymptotically optimal in minimizing the expected sample size. These general results are specialized to four concrete examples of prior information and are readily applicable to others. 


Here are some directions to extend this work:
(i) error metrics other than the familywise error probabilities, such as (the multihypothesis version of) the false discovery rates,
following \cite{Bartroff_2021_EquivErrorMetrics}, 
(ii) asynchronous decisions, i.e., decisions for different streams are made at different times, following \cite{PaperIII}, 
and (iii) a second-order asymptotic analysis of the expected sample size, following \cite{song_higherorder}.
There are also open questions about the general problem of sequential multiple testing with multiple hypotheses, which have been discussed in the introduction, such as controlling generalized error metrics, incorporating sampling constraints, and handling dependence or hierarchical structures among streams.
We hope this work could
draw attention to
the non-triviality of extending from two to multiple hypotheses in the problem of sequential multiple testing. 

\appendices
\section{Proofs in Section \ref{sec: universal lower bound} and \ref{sec: methodology}} \label{appendix, proofs}
\begin{proof} [Proof of Theorem \ref{theorem, LB}]
    Fix arbitrary $\cA,\bH\in\cA$ and $\balpha$. Also fix arbitrary $(T,\bD)\in\Delta(\balpha,\cA)$.
    For any $i\neq j\in[M]$ and $\bA\in\Alt_{i,j}(\bH,\cA)$, by Wald's identity, we have
    \begin{equation*}
    \begin{aligned}
        & \, \bExp_\bH\left[ \bell_\bH(T) - \bell_\bA(T) \right] = \bExp_\bH[T] \, \bExp_\bH\left[ \bell_\bH(1) - \bell_\bA(1) \right] = \bExp_\bH[T] \, \bI_\bH(\bA).
    \end{aligned}
    \end{equation*}
    Meanwhile, by the information-theoretical inequality (see, e.g., \cite[Lemma 3.2.1]{Tartakovsky_Book}) and the fact that $\bD$ is $\bcF(T)$-measurable, we have
    \begin{equation*}
    \begin{aligned}
        \bExp_\bH\left[ \bell_\bH(T) - \bell_\bA(T) \right] \geq \varphi\left( \bPro_\bH(\bD\neq\bH), \, \bPro_\bA(\bD=\bH) \right).
    \end{aligned}
    \end{equation*}
    Since $(T,\bD)\in\Delta(\alpha,\bH)$, $\bH,\bA\in\cA$, and the fact that $\bA\in\Alt_{i,j}(\bH,\cA)$ is equivalent to $\bH\in\Alt_{j,i}(\bA,\cA)$, 
    we have 
    $\bPro_\bH(\bD\neq\bH)\leq \alpha_{\operatorname{sum}}$ and $\bPro_\bA(\bD=\bH)\leq \alpha_{j,i}$.
    Since $\varphi(\cdot,\cdot)$ increases as its arguments decrease, we have
    \begin{equation*}
        \bExp_\bH\left[ \bell_\bH(T) - \bell_\bA(T) \right] \geq \varphi(\alpha_{\operatorname{sum}},\alpha_{j,i}).
    \end{equation*}
    Combining the above, we have
    \begin{equation*}
        \bExp_\bH[T] \geq \frac{\varphi(\alpha_{\operatorname{sum}},\alpha_{j,i})}{\bI_\bH(\bA)}
    \end{equation*}
    for all $i\neq j\in[M]$ and $\bA\in\Alt_{i,j}(\bH,\cA)$. 
    Taking the worst case completes the proof.
\end{proof}

\begin{proof} [Proof of Theorem \ref{theorem, error control}]
    Fix arbitrary $\cA$, $\bH\in\cA$ and $\ba$.
    We first show almost sure finiteness.
    It is clear that 
    \begin{equation} \label{hat T(bH)}
    \begin{aligned}
        \hat T & \leq \hat T(\bH) := \inf_{n\geq 1} \Big\{
        \hat\bH(n) = \bH \text{ and } \bell_{\bH}(n) - \max_{\bA\in\widetilde\Alt_{i,j}(\bH,\cA)} \bell_\bA(n) \geq a_{j,i} \,\,\forall\,i\neq j\in[M]
        \Big\}.
    \end{aligned}
    \end{equation}
    Since, for any $i\neq j\in[M]$ and $\bA\in\Alt_{i,j}(\bH,\cA)$, the process 
    \begin{equation*}
        \bell_\bH(n)-\bell_\bA(n) = \sum_{k\in\bH\triangle\bA}(\ell^k_{H^k}(n)-\ell^k_{A^k}(n)),\,n\geq 1
    \end{equation*}
    has i.i.d. increments with positive mean $\bI_\bH(\bA)$, we have $\hat T\leq\hat T(\bH)<\infty$ almost surely under $\bPro_\bH$.

    We then show upper bounds on the error probabilities. 
    Fix arbitrary $i\neq j\in[M]$.
    It is clear that
    \begin{equation*}
    \begin{aligned}
        & \, \{H_i\cap \hat D_j\neq\emptyset\} = \{\hat\bD\in\Alt_{i,j}(\bH,\cA)\} = \bigcup_{\bA\in\Alt_{i,j}(\bH,\cA)}\{\hat\bD=\bA\}.
    \end{aligned}
    \end{equation*}
    Fix arbitrary such $\bA$.
    By definition of the procedure and the fact that $\bA\in\Alt_{i,j}(\bH,\cA)$ is equivalent to $\bH\in\Alt_{j,i}(\bA,\cA)$, we have
    \begin{equation} \label{last step in the proof of error control}
    \begin{aligned}
        & \, \{\hat\bD=\bA\} \subseteq 
        \Big\{\bell_{\bA}(\hat T) - \max_{\bA'\in\Alt_{j,i}(\bA,\cA)} \bell_{\bA'}(\hat T) \geq a_{i,j}\Big\} \\
        \subseteq & \, \Big\{\bell_{\bA}(\hat T) - \bell_{\bH}(\hat T) \geq a_{i,j}\Big\} 
        \subseteq \Big\{ \exists\, n\geq 1, \bell_{\bA}(n) - \bell_{\bH}(n) \geq a_{i,j}\Big\}.
    \end{aligned}
    \end{equation}
    One may check by definition that $\{\bell_\bA(n)-\bell_\bH(n):n\geq 1\}$ is a martingale under $\bPro_\bH$. 
    So the desired upper bound follows from Ville's inequality and the union bound. 
\end{proof}

\begin{proof} [Proof of Theorem \ref{theorem, AUB}]
    Fix arbitrary $\cA$ and $\bH\in\cA$.
    Also fix arbitrary $\ba$ and $\epsilon\in(0,1)$, and denote the right-hand-side of \eqref{expression, AUB} as $N_\bH(\ba,\cA)$.
    Recall the stopping time in \eqref{hat T(bH)}. 
    It is clear that 
    \begin{equation*}
    \begin{aligned}
        & \, \bExp_\bH[\hat T] \leq \bExp_\bH[\hat T(\bH)] = \sum_{n=0}^\infty \bPro_\bH(\hat T(\bH)>n) \\
        \leq & \, N_\bH(\ba,\cA)/(1-\epsilon) + \sum_{n=N_\bH(\ba,\cA)/(1-\epsilon)}^\infty \bPro_\bH(\hat T(\bH)>n).
    \end{aligned}        
    \end{equation*}
    For any $n\geq N_\bH(\ba,\cA)/(1-\epsilon)$, by definition we have
    \begin{equation*}
    \begin{aligned}
        & \{\hat T(\bH)>n\} \subseteq \bigcup_{k\in[K]} \bigcup_{j\in[M]\backslash\{H^k\}} \{ \ell^k_{H^k}(n) \leq \ell^k_j(n) \} \cup \bigcup_{i\neq j\in[M]} \bigcup_{\bA\in\widetilde\Alt_{i,j}(\bH,\cA)} \Big\{ \bell_\bH(n)-\bell_\bA(n) < a_{j,i} \Big\}.
    \end{aligned}
    \end{equation*}
    For any $k\in[K]$ and $j\in[M]\backslash\{H^k\}$, we have
    \begin{equation*}
        \{\ell^k_{H^k}(n) \leq \ell^k_j(n)\} \subseteq \Big\{ \frac{1}{n}\midbig( \ell^k_{H^k}(n) - \ell^k_j(n) \midbig) \leq 0 \Big\}.
    \end{equation*}
    For any $i\neq j\in[M]$ and $\bA\in\widetilde\Alt_{i,j}(\bH,\cA)$, since
    \begin{equation*}
        n\geq \frac{a_{j,i}}{\bI_{i,j}(\bH,\cA)}\frac{1}{1-\epsilon} \geq \frac{a_{j,i}}{\bI_\bH(\bA)} \frac{1}{1-\epsilon},
    \end{equation*}
    we have 
    \begin{align}
        & \, \Big\{ \bell_\bH(n)-\bell_\bA(n) < a_{j,i} \Big\} \label{the step in the proof of AUB} \\
        \subseteq & \, \Big\{ \frac{1}{n} \sum_{k\in\bH\triangle\bA} \midbig( \ell^k_{H^k}(n) - \ell^k_{A^k}(n) \midbig) \leq (1-\epsilon) \bI_\bH(\bA) \Big\} \nonumber \\
        \subseteq & \, \bigcup_{k\in\bH\triangle\bA} \Big\{ \frac{1}{n} \midbig( \ell^k_{H^k}(n) - \ell^k_{A^k}(n) \midbig) \leq (1-\epsilon) I^k_{H^k,A^k} \Big\}. \nonumber
    \end{align}
    So
    \begin{equation*}
    \begin{aligned}
        \{\hat T(\bH)>n\} 
        \subseteq \bigcup_{k\in[K]} \bigcup_{j\in[M]\backslash\{H^k\}} \Big\{ \frac{1}{n}\midbig( \ell^k_{H^k}(n) - \ell^k_{A^k}(n) \midbig) \leq (1-\epsilon) I^k_{H^k,j} \Big\}.
    \end{aligned}
    \end{equation*}
    By the union bound and Lemma \ref{lemma for AUB}, it follows that 
    \begin{equation*}
    \begin{aligned}
        \bExp_\bH[\hat T] & \leq N_\bH(\ba,\cA)/(1-\epsilon) + \sum_{n=1}^\infty \sum_{\substack{k\in[K] \\ j\in[M]\backslash\{H^k\} }} \bPro_\bH\Big( \frac{1}{n}\midbig( \ell^k_{H^k}(n) - \ell^k_{A^k}(n) \midbig) \leq (1-\epsilon) I^k_{H^k,j} \Big) \\
        & = N_\bH(\ba,\cA)/(1-\epsilon) + constant(\epsilon),
    \end{aligned}
    \end{equation*}
    where $constant(\epsilon)$ represents a constant irrelevant with $\ba$.
    First letting $a_\min\to\infty$ and then letting $\epsilon\to 0$ complete the proof. 
\end{proof}

\section{Proofs in Section \ref{sec: extension to composite hypotheses}} \label{appendix: proofs related to composite hyps}
\begin{proof} [Proof of Theorem \ref{theorem, composite, universal lower bound}]
    Fix $\cA$, $\btheta\in\bTheta_\cA$ and $\balpha$. 
    It is clear that, for any $\{v^k_j\in\Theta^k_j: j\in[M]\backslash\{H^k(\btheta)\},\,k\in[K]\}$, any procedure that solves the composite hypothesis testing problem considered in this section also solves with the same or better reliability the simple hypothesis testing problem in the previous sections with hypotheses
    \begin{equation*}
        f^k=f^k_{\theta^k} \text{ and } f^k=f^k_{v^k_j} \text{ for } j\in[M]\backslash\{H^k(\btheta)\}
    \end{equation*}    
    for every stream $k\in[K]$.
    Thus, applying Theorem \ref{theorem, LB}, we have
    \begin{equation*}
    \begin{aligned}
         \cL_\btheta(\balpha,\cA) \geq \sup_{\substack{v^k_j\in\Theta^k_j \text{ for } k\in[K] \\ \text{and } j\in[M]\backslash\{H^k(\btheta)\} }} \max_{i\neq j\in[M]} \frac{\varphi(\alpha_{\operatorname{sum}}, \alpha_{j,i})}{\min\limits_{\bA\in\widetilde\Alt_{i,j}(\bH(\btheta),\cA)} \sum\limits_{k\in\bH(\btheta)\triangle\bA} I^k_{\theta^k,v^k_{A^k}} },
    \end{aligned}
    \end{equation*}
    which simplifies to the desired form.
\end{proof}

\begin{proof} [Proof of Theorem \ref{theorem, composite, error control}]
    Since condition \eqref{composite, complete assumption} implies that the evidence almost surely increases at least linearly in $n$ whereas condition \eqref{composite, rate of beta} implies that the thresholds increase sub-linearly in $n$ for any $\balpha$, the almost sure finiteness follows.
    The proof of error control follows the same steps as the proof of Theorem \ref{theorem, error control}, with condition \eqref{composite, ville-type assumption} applied to the last step of \eqref{last step in the proof of error control}.
\end{proof}

\begin{proof} [Proof of Theorem \ref{theorem, composite, AUB}]
    Based on condition \eqref{composite, rate of beta}, write $\beta(n,\alpha)$ as $|\log\alpha|(1+c_1(\alpha))+c_2(n)n$, where $c_1(\alpha)\to 0$ as $\alpha\to 0$ and $c_2(n)\to 0$ as $n\to\infty$. 
    The proof of asymptotic upper bound follows the same steps as that of Theorem \ref{theorem, AUB}, except that in step \eqref{the step in the proof of AUB}, we have
    \begin{equation*}
    \begin{aligned}
        & \, \Big\{ \bell_{\bH(\btheta)}(n) - \bell_\bA(n) < \beta\midbig(n,\alpha_{j,i}/b_{j,i}(\cA)\midbig) \Big\} \\
        \subseteq & \, \Big\{ \frac{1}{n} \sum_{k\in\bH(\btheta)\triangle\bA} \midbig( \ell^k_{H^k(\btheta)}(n) - \ell^k_{A^k}(n) \midbig) \leq  (1+c_1(\alpha_{j,i}))(1-\epsilon)\bI_\btheta(\bA) + c_2(n) \Big\} \\
        \subseteq & \, \Big\{ \frac{1}{n} \sum_{k\in\bH(\btheta)\triangle\bA} \midbig( \ell^k_{H^k(\btheta)}(n) - \ell^k_{A^k}(n) \midbig) \leq (1-\epsilon/2)\bI_\btheta(\bA) \Big\}
    \end{aligned}
    \end{equation*}
    when $\alpha_\max$ is small enough and, thus, $n$ is large enough.
\end{proof}

\section{Supporting lemmas}
\begin{lemma} \label{lemma for AUB}
    Under condition \eqref{only assumption}, for any $\epsilon\in(0,1)$,
    \begin{equation*}
        \Pro^k_i\Big( \frac{1}{n}\midbig( \ell^k_i(n) - \ell^k_j(n) \midbig) \leq (1-\epsilon) I^k_{i,j} \Big) < \infty.
    \end{equation*}
\end{lemma}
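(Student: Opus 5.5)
The statement as written omits the summation; what is actually used in the proof of Theorem \ref{theorem, AUB} is the complete-convergence claim
\begin{equation*}
    \sum_{n=1}^\infty \Pro^k_i\Big( \frac{1}{n}\midbig( \ell^k_i(n) - \ell^k_j(n) \midbig) \leq (1-\epsilon) I^k_{i,j} \Big) < \infty,
\end{equation*}
and this is what I would prove. Write $Z_t := \log f^k_i(X^k(t)) - \log f^k_j(X^k(t))$, $t\geq 1$. Under $\Pro^k_i$ these are i.i.d. with mean $I^k_{i,j}\in(0,\infty)$ by \eqref{only assumption}, and $S_n:=\ell^k_i(n)-\ell^k_j(n)=\sum_{t\le n} Z_t$. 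The event is then a lower-tail deviation $\{S_n/n \le I^k_{i,j}-\epsilon I^k_{i,j}\}$ of a random walk with positive drift.

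The main subtlety is that only the first moment of $Z_1$ (indeed only $\Exp|Z_1|<\infty$ is guaranteed from finiteness of $I^k_{i,j}$, since the negative part $Z_1^-$ is integrable automatically because $\Exp_i[e^{-Z_1}]\le 1$) is available, so Chebyshev-type bounds are not. However, we only need the \emph{lower} tail, and here the exponential moment helps: $\Exp^k_i[e^{-\theta Z_1}] = \int f_i^{1-\theta} f_j^{\theta}\,d\nu^k \le 1$ for all $\theta\in[0,1]$ by Hölder. So I would use a Chernoff bound on the lower tail:
\begin{equation*}
    \Pro^k_i\big(S_n \le n(1-\epsilon)I\big) \le \exp\{ n[\theta(1-\epsilon)I + \log \Exp^k_i e^{-\theta Z_1}] \} = e^{-n\psi(\theta)},
\end{equation*}
with $I:=I^k_{i,j}$ and $\psi(\theta):=-\theta(1-\epsilon)I-\Lambda(\theta)$, $\Lambda(\theta):=\log\Exp^k_i e^{-\theta Z_1}$.

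Next I would show that $\psi(\theta)>0$ for some small $\theta\in(0,1)$. Because $\Lambda$ is finite and convex on $[0,1]$ with $\Lambda(0)=0$, its right derivative at $0$ equals $-\Exp^k_i Z_1=-I$. This follows by monotone/dominated convergence applied to $(e^{-\theta Z_1}-1)/\theta$: on $\{Z_1\ge 0\}$ this ratio is bounded by $Z_1$ in absolute value, and on $\{Z_1<0\}$ it increases to $|Z_1|$ as $\theta\downarrow 0$, which is integrable. Hence $\Lambda(\theta) = -\theta I + o(\theta)$, and so $\psi(\theta)=\theta\epsilon I - o(\theta)>0$ for $\theta$ small. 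The terms then decay geometrically and the series converges. I expect the justification of the derivative at $0$ (the interchange of limit and expectation under only first-moment assumptions) to be the only delicate step.

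As a fallback, one could instead invoke the Baum–Katz / Hsu–Robbins–Erdős theorem. That route needs $\Exp Z_1^2<\infty$ for complete convergence at rate $\sum_n \Pro(\cdot)$, which is unavailable in general. The one-sided exponential moment argument above avoids this, so I would commit to the Chernoff route.
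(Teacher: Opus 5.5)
Your proof is correct and is essentially the paper's argument. Both use a Chernoff bound on the lower tail with the one-sided exponential moment $\Exp^k_i[e^{-\theta Z_1}]\le 1$ for $\theta\in[0,1]$ (the paper's $M(\theta)$ on $[-1,0]$) and the fact that the derivative at $0$ is $-I^k_{i,j}$; you are right that the intended statement is the summed version, and you justify the interchange of limit and expectation more carefully than the paper, except for one small slip: on $\{Z_1<0\}$ the ratio $(e^{\theta|Z_1|}-1)/\theta$ \emph{decreases} to $|Z_1|$ as $\theta\downarrow 0$, so use dominated convergence with the integrable bound $e^{-Z_1}$ (which dominates the ratio for all $\theta\in(0,1]$) rather than monotone increase.
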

\begin{proof}
    We suppress the lower and upper indices and denote the moment generating function of $\ell^k_i(1) - \ell^k_j(1)$ as 
    \begin{equation*}
        M(\theta) := \Exp^k_i\left[ \exp\{\theta(\ell^k_i(1) - \ell^k_j(1))\} \right] \text{ for } \theta\in\mathbb{R}.
    \end{equation*}
    It is clear that $M(0)=M(-1)=1$ which, by the property of moment generating functions (see, e.g., \cite{Dembo_Zeitouni_LDPBook}), implies that $M(\theta)$ is finite and continuous in $[-1,0]$, 
    and that $M'(0-)=I^k_{i,j}>0$.
    For any $x<I^k_{i,j}$, consider function 
    \begin{equation*}
        g_x(\theta) := \theta x - \log M(\theta) \text{ for } \theta\in[-1,0].
    \end{equation*}
    It is clear that $g_x(0)=0$ and $g'_x(0-)=x-I^k_{i,j}<0$, so there must exist a $\theta\in(-1,0)$ so that $g_x(\theta)>0$. 
    That is, we have
    \begin{equation*}
        \sup_{\theta\in(-1,0)} \midbig\{ \theta x-\log M(\theta) \midbig\} >0 \text{ for all } x<I^k_{i,j}.
    \end{equation*}

    Fix $\epsilon\in(0,1)$.
    By Chernoff's bound, for any $\theta\in(-1,0)$,
    \begin{equation*}
    \begin{aligned}
        & \, \Pro^k_i\Big( \frac{1}{n}\midbig( \ell^k_i(n) - \ell^k_j(n) \midbig) \leq (1-\epsilon) I^k_{i,j} \Big) \\
        = & \, \Pro^k_i\Big( \exp\midbig\{ \theta\midbig(\ell^k_i(n) - \ell^k_j(n)\midbig) \midbig\} \geq \exp\midbig\{ n\theta(1-\epsilon)I^k_{i,j} \midbig\} \Big) \\
        \leq & \, \Exp^k_i\Big[ \exp\midbig\{ \theta\midbig(\ell^k_i(n) - \ell^k_j(n)\midbig) \midbig\} \Big] \exp\midbig\{ -n\theta(1-\epsilon)I^k_{i,j} \midbig\} \\
        = & \, M(\theta)^n \exp\midbig\{ -n\theta(1-\epsilon)I^k_{i,j} \midbig\} \\
        = & \, \exp\midbig\{ -n\midbig( \theta(1-\epsilon)I^k_{i,j} - \log M(\theta) \midbig) \midbig\}.
    \end{aligned}
    \end{equation*}
    A sufficient condition for this being summable is 
    \begin{equation*}
        \sup_{\theta\in(-1,0)} \midbig\{ \theta(1-\epsilon)I^k_{i,j} - \log M(\theta) \midbig\} > 0,
    \end{equation*}
    which has been proved to be true.
\end{proof}

\bibliographystyle{chicago}
\bibliography{main}

\end{document}